\documentclass[11pt,envcountsame]{article}

\usepackage[a4paper,margin=1in]{geometry}
\usepackage{xspace}
\usepackage{comment}
\usepackage{amsmath}
\usepackage{amsfonts}
\usepackage{amssymb}
\usepackage{amsthm}
\usepackage{thmtools,thm-restate}
\usepackage[textsize=small]{todonotes}
\usepackage{tikz}
\usetikzlibrary{arrows,automata}
\usepackage{stmaryrd}
\usepackage{mathdots}
\usepackage{algorithm}
\usepackage[noend]{algpseudocode}
\usepackage{thmtools}

\theoremstyle{definition}\newtheorem{theorem}{Theorem}
\theoremstyle{definition}\newtheorem{corollary}[theorem]{Corollary}
\theoremstyle{definition}\newtheorem{example}[theorem]{Example}
\theoremstyle{definition}\newtheorem{lemma}[theorem]{Lemma}
\theoremstyle{definition}
\theoremstyle{definition}
\theoremstyle{definition}
\theoremstyle{definition}
\theoremstyle{definition}\newtheorem{conjecture}[theorem]{Conjecture}


\newcommand{\ignore}[1]{}


\newcommand{\F}{\mathcal{F}}

\newcommand{\A}{\mathcal{A}}

\newcommand{\N}{\mathbb{N}}
\newcommand{\Z}{\mathbb{Z}}

\newcommand{\set}[1]{\{#1\}}


\newcommand{\reach}{\textsc{Reach}}

\newcommand{\trans}[1]{\stackrel{#1}{\longrightarrow}}

\newcommand{\trg}{\textup{trg}}
\newcommand{\src}{\textup{src}}

\newcommand{\conf}{\textup{Conf}}

\newcommand{\inp}{\textup{in}}
\newcommand{\out}{\textup{out}}
\newcommand{\size}{\textup{size}}
\newcommand{\flush}{\textbf{\textup{flush}}}
\newcommand{\ztest}{\textbf{\textup{zero-test}}}
\newcommand{\mult}{\textbf{\textup{multiply}}}
\newcommand{\reaches}{\longrightarrow}



\newcommand{\ampl}{\textup{\textbf{ampl}}\xspace}

\newcommand{\eff}{\textup{eff}}


\newcommand{\nl}{\textup{NL}\xspace}

\newcommand{\np}{\textup{NP}\xspace}
\newcommand{\pspace}{\textup{PSpace}\xspace}
\newcommand{\exptime}{\textup{ExpTime}\xspace}
\newcommand{\expspace}{\textup{ExpSpace}\xspace}

\newcommand{\tower}{\textup{Tower}\xspace}
\newcommand{\ackermann}{\textup{Ackermann}\xspace}

\newcommand{\add}[2]{$\coreadd{\vr{#1}}{#2}$}
\newcommand{\sub}[2]{$\coresub{\vr{#1}}{#2}$}
\newcommand{\coreadd}[2]{#1 \,\, +\!\!= \, #2}
\newcommand{\coresub}[2]{#1 \,\, -\!\!= \, #2}
\newcommand{\vr}[1]{#1}

\title{Lower Bounds for the Reachability Problem in Fixed Dimensional VASSes}

\author{Wojciech Czerwi\'nski \\ University of Warsaw \\ wczerwin@mimuw.edu.pl\footnote{Supported by the ERC grant INFSYS, agreement no. 950398.} \and
Łukasz Orlikowski \\ University of Warsaw \\ lo418363@students.mimuw.edu.pl\footnote{Supported by the Ministry of Science and Higher Education project Szkoła Orłów, project number 500-D110-06-0465160.}}


\pagestyle{plain}

\begin{document}

\maketitle

\begin{abstract}
We study the complexity of the reachability problem for Vector Addition Systems with States (VASSes)
in fixed dimensions. We provide four lower bounds improving the currently known state-of-the-art:
1) \np-hardness for unary flat $4$-VASSes (VASSes in dimension 4),
2) \pspace-hardness for unary $5$-VASSes,
3) \expspace-hardness for binary $6$-VASSes
and 4) \tower-hardness for unary $8$-VASSes.
\end{abstract}

\section{Introduction}
Vector Addition Systems (VASes) together with essentially equivalent Petri nets and Vector Addition Systems with States (VASSes)
are fundamental models of computation with many application in practice and theory. Central algorithmic problem concerning
VASSes is the reachability problem asking whether in a given VASS there exists a run from one given configuration to another.
Long research history of this problem dates back to 70-ties when Lipton
has proven \expspace-hardness of the reachability problem~\cite{Lipton76}.
Decidability of the problem was shown a few years later by Mayr in~\cite{DBLP:conf/stoc/Mayr81},
where he presented a very involved algorithm.
After a few decades of research recently the complexity of the problem was settled to be \ackermann-complete.
The upper bound was shown by Leroux and Schmitz in~\cite{DBLP:conf/lics/LerouxS19} three years ago.
Last year \ackermann-hardness was independently proven by Leroux~\cite{DBLP:journals/corr/abs-2104-12695}
and by Czerwi\'nski and Orlikowski~\cite{DBLP:journals/corr/abs-2104-13866}.

Despite settling the computational complexity of the reachability problem in VASSes a lot of questions about VASSes
remain to be solved. Even the reachability problem is not fully understood and the most clear evidence for that is
the existence of big complexity gaps for the problem in small fixed dimensions. The prominent example
here is the dimension three with complexity gap between \pspace-hardness
(inherited from dimension two~\cite{DBLP:conf/lics/BlondinFGHM15})
and super-\tower (concretely speaking the $\F_7$, namely the $7$-th level of the Grzegorczyk hierarchy~\cite{DBLP:conf/lics/LerouxS19}).
The reachability problem was already extensively studied for fixed dimensions. For dimension one (i.e. for $1$-VASSes)
for binary encoding of numbers occurring in transitions it was shown to be \np-complete in~\cite{DBLP:conf/concur/HaaseKOW09}.
For unary encoded $1$-VASSes it is easy to see that the reachability problem is \nl-complete.
For $2$-VASSes the problem is known to be \pspace-complete in the case of binary encoding~\cite{DBLP:conf/lics/BlondinFGHM15}
and moreover \nl-complete in the case of unary encoding~\cite{DBLP:conf/lics/EnglertLT16}.
However, beyond dimension two the situation is much less clear.

In~\cite{DBLP:conf/concur/Czerwinski0LLM20} several cases of the reachability problem for fixed dimensional
VASSes were considered. In particular a subclass of flat VASSes was investigated, namely VASSes without nested loops
in the state structure.
This class was introduced in~\cite{DBLP:conf/concur/LerouxS04} and has a bunch of nice properties.
In particular the reachability relation is semilinear and the reachability problem can be easily shown to be in \np,
even in the case of binary encoding.
In~\cite{DBLP:conf/concur/Czerwinski0LLM20} it was shown that the reachability problem is \np-hard already for
a fixed dimension and unary encoding, namely for unary $7$-VASSes,
but the status of the problem for lower dimensions remained unsettled.

The first \expspace-hardness result for fixed dimension follows from~\cite{DBLP:conf/stoc/CzerwinskiLLLM19}, where it was shown
that the problem is $h$-\expspace-hard for unary $(h+13)$-VASSes, thus \expspace-hard for unary $14$-VASSes.
Recent \ackermann-hardness results delivered also \tower-hardness results in fixed dimensions.
Notice that \tower-hardness for binary $d$-VASSes implies \tower-hardness for unary $d$-VASSes
as the \tower complexity class is closed under exponential blowup of running time.
Thus we may not emphasise encoding when talking about \tower-hardness.
The dimension in which the problem is \tower-hard was step by step decreased from $21$ in the initial version of~\cite{DBLP:journals/corr/abs-2104-12695} and $18$ in~\cite{DBLP:journals/corr/abs-2104-13866} through dimension $17$ in third version of~\cite{DBLP:journals/corr/abs-2104-12695}, $11$ in the recent Lasota's work~\cite{DBLP:journals/corr/abs-2105-08551} to
a currently best value of $10$ in last version of~\cite{DBLP:journals/corr/abs-2104-12695}.
We further decrease the dimension and show that the reachability problem is already \tower-hard for $8$-VASSes.

\paragraph*{Our contribution}
We believe it is important to pursue the search for exact complexities for fixed dimensional VASSes.
First of all low dimensional VASSes are very natural computation models and currently known techniques
used to provide hardness results are very likely not to work in some small dimensions. Secondly, it is easier
to invent a sophisticated technique working in a simpler setting. Therefore it is quite
possible that the search for exact complexity bounds for the reachability problem in low dimensions will result
in finding new techniques useful in much broader generality. Thirdly, despite very high pessimistic complexity
of the reachability problem it still can be solved in practise in many cases~\cite{DBLP:conf/tacas/DixonL20}.
Therefore it is not only a theoretical, but may be also of practical interest to understand for which VASS subclasses
the reachability problem have relatively low complexity and avoiding which obstacles may lead to efficient algorithms.
One obvious way to pursue this idea is to understand better low dimensional VASSes.

Our main results are the four lower bound theorems, which improve the previously mentioned lower bounds.
Additionally we introduce a novel technique of proving lower bounds inspired by multiplication triples
technique introduced in~\cite{DBLP:conf/stoc/CzerwinskiLLLM19} and used also
in~\cite{DBLP:journals/corr/abs-2104-13866,DBLP:journals/corr/abs-2105-08551}.
We call it the quadratic pairs technique and use to decrease the dimension of VASSes
in certain hardness results. Concretely speaking we apply this approach to prove Theorems~\ref{thm:pspace}~and~\ref{thm:expspace}.

Beside that our main conceptual contribution is to compose already known techniques in a subtle way
in order to get lower bounds, which are 1) substantially stronger than currently known,
and 2) shown by some not very involved constructions.
We would like to emphasise that our constructions are rather simple,
but we see it as an advantage rather than a disadvantage.

As a first contribution we provide a simple construction which decreases the dimension
in which the reachability problem is \np-hard for unary, flat VASSes,
namely we decrease the dimension from $7$ in~\cite{DBLP:conf/concur/Czerwinski0LLM20} to a dimension $4$.

\begin{theorem}\label{thm:np}
The reachability problem for unary, flat $4$-VASSes is \np-hard.
\end{theorem}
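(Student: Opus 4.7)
The plan is to reduce a strongly \np-hard problem -- a natural candidate is $3$SAT, which is \np-hard already when the input is encoded in unary -- to reachability in a unary flat $4$-VASS. Given an instance $\phi$ with $n$ variables and $m$ clauses I would build a flat VASS whose underlying state graph is a single linear chain, containing one \emph{variable gadget} per propositional variable followed by one \emph{clause gadget} per clause. Each variable gadget offers a non-deterministic binary choice, for example via two parallel transitions between consecutive states, so that no counter needs to be spent as a ``which-gadget-are-we-in?'' clock; each clause gadget is then a constant-size sub-VASS whose traversability encodes satisfaction of the corresponding clause.

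With only four counters, one cannot afford a separate counter per variable, so the truth assignment has to be compressed. A natural allocation is to dedicate two of the four counters as \emph{ledgers} -- one accumulating contributions from literals set to true, the other from literals set to false -- with carefully chosen polynomially bounded weights $w_i$ deposited by the $i$-th variable gadget depending on the chosen polarity. The remaining two counters are reserved as scratch space for the clause phase. Each clause gadget would consume credit from the appropriate ledger for each literal it mentions, so that by non-negativity the gadget is traversable iff at least one literal in the clause contributed the matching credit during the variable phase.

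The main obstacle is the tight dimension budget together with flatness: the VASS cannot revisit variable gadgets, so all verification must be carried out inline in a single forward pass using just four counters, and a naive encoding would exhaust the available counters immediately. The central design question is therefore the weighting scheme: the weights $w_i$ (and the associated clause-gadget decrements) must be chosen so that distinct assignments produce ledger states that the clause gadgets can correctly disambiguate with only two scratch counters, while keeping the whole construction polynomial-size under unary encoding. Once the weights and gadgets are fixed, both directions of the equivalence should be routine to check: a satisfying assignment of $\phi$ yields a source-to-target run by picking the matching transitions at each variable gadget, and conversely the non-negativity constraints along the flat path force any accepting run to decode into a satisfying assignment.
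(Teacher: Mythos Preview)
Your proposal leaves the decisive step unresolved, and as stated it cannot be completed. You correctly flag the weighting scheme as the central obstacle, but you do not solve it, and there is a genuine information-theoretic barrier. With polynomially bounded weights $w_i$ the ledger counter takes only polynomially many values, so distinct truth assignments collapse to the same ledger state; a clause gadget that tries to subtract $w_i$ to certify ``literal $\ell_i$ was set true'' will succeed spuriously by eating into credit deposited for some other true literal $\ell_j$. Making the $w_i$ separated enough to prevent this cross-contamination (so that no $w_i$ is a sub-sum of credits from other literals) forces exponential magnitudes, which a unary polynomial-size construction cannot write down as transition effects. Depositing multiple copies per literal (bounded-occurrence 3SAT) does not help: the interference between different literals sharing one ledger is the issue, not the multiplicity. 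In short, the architecture ``guess the assignment into one or two small ledgers, then verify clauses by local subtractions in a single forward pass'' does not survive the combination of four counters, unary updates, and flatness.

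The paper sidesteps this barrier by reducing from \textsc{Subset Sum} rather than 3SAT. \textsc{Subset Sum} already has the additive shape a VASS handles natively: a summing counter $z$ is loaded with the target $s_0$, and for each $s_i$ the VASS nondeterministically either subtracts $s_i$ from $z$ or skips it; reaching $z=0$ is equivalent to a positive instance. The non-trivial point is that the $s_i$ are given in binary while the VASS is unary. This is handled by a chain of doubling loops on two counters $x,y$ (flush $x$ into $y$, flush $y$ back into $x$ while multiplying by $2$, add the next bit) that builds each $s_i$ bit by bit; a fourth counter is used as a controlling-counter in the sense of Lemma~\ref{lem:zero-testing} to force every flush to be full, so the value produced is exactly $s_i$. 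The result is a single forward chain of simple loops, hence flat, with all transition entries polynomial in $n$ and the bit-length $k$, hence unary. This controlling-counter mechanism is precisely what lets a unary flat $4$-VASS manipulate the exponentially large exact quantities that your ledger scheme would also need but does not provide.
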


We need only one dimension more to show \pspace-hardness for unary (not necessarily flat though) VASSes.

\begin{theorem}\label{thm:pspace}
The reachability problem for unary $5$-VASSes is \pspace-hard.
\end{theorem}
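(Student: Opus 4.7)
The plan is to reduce from the reachability problem in a Minsky machine whose two counters are bounded by $2^n$, which is a classical \pspace-complete problem. Since the target \vass uses the unary encoding, the bound $2^n$ cannot be written directly on a transition; the reduction must therefore first \emph{construct} the value $2^n$ on one of the five counters by iterated amplification, and then use this counter as a budget against which the two bounded Minsky counters are simulated in a standard Lipton-style way.

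For the amplification I would build the \emph{quadratic pairs} gadget promised in the introduction. A quadratic pair should be a gadget that, working on only two counters, enforces a relation of the form $y = x^2$ (or at least prevents $y$ from exceeding $x^2$), with global soundness witnessed by reaching a designated final configuration in which all counters are zero. Cascading such a gadget $\log n$ times, with the output of one stage re-routed as the input of the next, yields an exponential value $2^n$ on one counter starting from the input $n$. The saving over the older multiplication triples of~\cite{DBLP:conf/stoc/CzerwinskiLLLM19}, which require a third counter to serve as the product register, is precisely one dimension --- exactly the saving needed to land in dimension five rather than six.

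Once the budget counter holds $2^n$, the simulation of the bounded Minsky machine proceeds along the familiar route: the two Minsky counters occupy two further \vass counters, a zero test on a Minsky counter is realised by temporarily `complementing' it against the budget and reading the total back, and the one remaining counter is used as scratch space both for these checks and for bookkeeping during the cascade of squarings. Correctness of every step --- the amplification stages and the zero tests alike --- is ultimately witnessed by the ability to reach a designated final configuration.

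The main obstacle is the design and verification of the quadratic pair gadget itself. Classical multiplication triples draw their soundness from the fact that the product counter can be inspected independently; dropping that counter means the quadratic relation must be forced through a subtle interplay between just two counters and the control-flow, with the consistency of every individual squaring checked only globally at the end of the run rather than locally within the loop. Producing a gadget that is both sound (no run inflates the value beyond $x^2$) and complete (the intended value $x^2$ is achievable), and then arguing that $\log n$ such stages compose cleanly within the same small pool of counters without any need for a reset between stages, is where the bulk of the technical effort will lie.
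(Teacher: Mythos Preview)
Your plan inverts the role of the quadratic pairs technique, and this is not a cosmetic difference. In the paper a quadratic pair is not a gadget that \emph{computes} $y=x^{2}$ from $x$; rather, it \emph{consumes} a pre-existing pair $(b,c)=(2B,4B^{2})$ in order to furnish up to $B$ zero-tests on $B$-bounded counters (Lemma~\ref{lem:pairs}): each test decrements $b$ by one and drives the invariant $(x+b)^{2}\le c$ towards $0=c$, which is what is checked at the end. Nothing in this mechanism lets you manufacture $x^{2}$ from $x$ on two counters alone, and no such two-counter squaring gadget is available --- a $2$-\vass without zero-tests cannot even multiply, so the ``cascade of $\log n$ squarings'' you envisage has no implementation.

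Accordingly the two phases in the paper are the opposite of what you describe. The exponential pair $(4s\cdot 2^{n},\,16s^{2}\cdot 4^{n})$ is built first, not by squaring but by $n$ successive exact doublings and quadruplings; their exactness is enforced by the \emph{controlling-counter} technique (Lemma~\ref{lem:zero-testing}), which costs a single extra counter and works here precisely because only $O(n)$ zero-tests are needed and $n$ is in unary. This is Lemma~\ref{lem:five-vass}: counters $x_{1},x_{2}$ hold the pair, $x_{3}$ is a transfer register, $x_{5}$ is the controlling counter, $x_{4}$ idles. Only afterwards is the pair $(x_{1},x_{2})$ \emph{spent} to supply the zero-tests for simulating the $2^{n}$-bounded two-counter automaton on $x_{3},x_{4}$ (Corollary~\ref{cor:pspace} via Lemma~\ref{lem:simulation}). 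The ``Lipton-style complementation against a budget'' you propose for the simulation phase does not by itself yield zero-tests in bounded dimension; the quadratic-pair bookkeeping is exactly what stands in for the missing third counter of a multiplication triple there. In short: controlling-counter to \emph{build} the pair, quadratic pair to \emph{run} the simulation --- not the other way round.
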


Next we lower the dimension for which \expspace-hardness is known from $10$~\cite{DBLP:journals/corr/abs-2104-12695} to $6$.

\begin{theorem}\label{thm:expspace}
The reachability problem for binary $6$-VASSes is \expspace-hard.
\end{theorem}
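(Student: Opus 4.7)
The plan is to reduce from reachability in Minsky counter machines whose counters are bounded by $2^{2^n}$, which is \expspace-complete since such machines faithfully simulate polynomial-space Turing machines. We would encode such a two-counter machine into a $6$-\vass whose transition effects are written in binary. The use of binary encoding is crucial here: values of magnitude $2^{\poly(n)}$ can be placed directly on transitions, so only a single ``amplification'' step suffices to reach the double-exponential budget $2^{2^n}$, in contrast with Lipton's unary construction which needs nested amplifications and therefore pays a heavier dimension price.

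The key ingredient should be a gadget that enforces a large bound and implements zero-tests at the same time, using the quadratic pairs technique announced in the paper. By analogy with the multiplication triples of~\cite{DBLP:conf/stoc/CzerwinskiLLLM19}, in which a triple $(a,b,c)$ is forced to satisfy $a\cdot b = c$ along any accepting run, a quadratic pair would consist of two \vass counters constrained by a quadratic relation, e.g.\ $a \cdot b = N$ with $N = 2^{2^n}$. The gadget would seed the pair in a canonical ``maximal'' state obtained by a single amplification of a binary-encoded constant of size $2^{\poly(n)}$, and would then verify the relation by requiring a canonical ``minimal'' final state. A zero-test on a simulated counter $c$ would be implemented by interleaving decrements of $c$ with controlled moves on the pair, in such a way that restoring the pair's quadratic invariant at the end of the gadget is possible only if $c$ was $0$ at the intended step. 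Thus a single quadratic pair, realized in only two dimensions, is rich enough to both witness the exponential bound and to certify zero-tests.

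For the dimension budget I would allocate two counters to the quadratic pair, two counters to the simulated Minsky counters, and the remaining two to auxiliary scratch counters used to sequence increments, decrements and zero-test gadgets; the whole construction must be polynomial in $n$, which is immediate once the seed constants are written in binary. The main obstacle I expect is precisely to squeeze everything into exactly six dimensions: giving each simulated counter its own quadratic gadget would blow up the dimension, so a single quadratic pair has to be reused for \emph{all} zero-tests along the run, even though each zero-test intrusively breaks its invariant and must then restore it. The correctness proof will therefore reduce to a careful invariant argument showing that the only way to bring the pair back to its canonical final value is to have performed legitimate zero-tests in between, and that the amplification phase cannot be exploited to cheat; this nonlinear ``hyperbola'' invariant is what prevents a cheating run from hiding small errors inside a linear combination, and is the place where the quadratic pairs technique does the real work.
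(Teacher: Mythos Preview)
Your proposal has two concrete gaps that would prevent the argument from going through.

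\textbf{First, the amplification budget is miscounted.} You write that ``values of magnitude $2^{\poly(n)}$ can be placed directly on transitions, so only a single amplification step suffices to reach the double-exponential budget $2^{2^n}$.'' This is the crux of the difficulty and it is not correct as stated. Binary encoding lets you write constants of size $2^{\poly(n)}$ on transitions, i.e.\ singly-exponential numbers; it does \emph{not} give you $2^{2^n}$ for free. To build a value of order $2^{2^n}$ you still need roughly $2^n$ multiplication-by-constant steps, and enforcing that a loop is executed \emph{exactly} $2^n$ times requires zero-tests, which is circular. The paper resolves this not with the quadratic pair but with a \emph{multiplication triple}: it seeds $(x_4,x_5,x_6)=(B,\,8\cdot 2^n+2,\,(8\cdot 2^n+2)B)$ for a guessed $B$, where the singly-exponential constant $8\cdot 2^n+2$ is the place binary encoding is actually used. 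This triple certifies exactly $4\cdot 2^n+1$ zero-tests, which in turn force exactly $2^n$ iterations of the doubling loop that manufactures the doubly-exponential seed $(6s\cdot 4^{2^n},\,36s^2\cdot 16^{2^n})$ on $(x_1,x_2)$. Your plan has no analogue of this step.

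\textbf{Second, the description of the quadratic pairs technique is off.} You describe it as a ``hyperbola'' invariant $a\cdot b=N$; in the paper the invariant is parabolic, namely $(x+b)^2=c$, and the pair is used \emph{after} the seed $(2B,4B^2)$ has already been produced, to simulate a $B$-bounded $d$-counter automaton in $d+2$ dimensions (Lemma~\ref{lem:pairs}, Lemma~\ref{lem:simulation}). For \expspace the relevant automaton has $d=3$ counters (Theorem~\ref{thm:bounded-counters}), giving a $5$-\vass with a prescribed doubly-exponential source (Corollary~\ref{cor:expspace}); the sixth dimension is then spent on the multiplication-triple gadget above that \emph{generates} this source (Lemma~\ref{lem:six-vass}). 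So the architecture is two-phase: multiplication triples build the seed, quadratic pairs consume it. Your single-gadget plan that simultaneously ``witnesses the bound and certifies zero-tests'' would have to do both jobs at once, and as it stands there is no mechanism in your outline that counts to $2^n$.
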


Notice that Theorem~\ref{thm:expspace} clearly shows also \pspace-hardness for unary $6$-VASSes, but for 
\pspace-hardness we can eliminate one dimension in the proof of Theorem~\ref{thm:pspace}.

We also show that only two dimensions more than needed for \expspace-hardness is enough to get \tower-hardness.

\begin{theorem}\label{thm:tower}
The reachability problem for unary $8$-VASSes is \tower-hard.
\end{theorem}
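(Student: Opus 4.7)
The plan is to reduce from reachability in Minsky counter machines whose counters are bounded by a tower of exponentials in the input length, following the template of recent $\tower$-hardness proofs for low-dimensional VASSes~\cite{DBLP:journals/corr/abs-2104-12695,DBLP:journals/corr/abs-2105-08551}. The central ingredient is an \emph{amplifier}: a fixed-dimensional gadget that, given two counters bounded by some value $B$, simulates zero-testable counters bounded by roughly $2^B$ (or $B^B$). Iterating the amplifier linearly many times in the state structure produces counters whose values reach any prescribed tower height, and on top of them one simulates the target Minsky machine in the usual way. Because the dimension is preserved across iterations, this strategy has a genuine chance of yielding a fixed and small dimension.

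The bulk of the construction consists of the amplifier itself, whose engine is a multiplicative routine that, in each iteration of a loop, transfers $B$ units between two pairs of counters while decrementing a third counter acting as iteration index. This is exactly the place where the \emph{quadratic pairs} technique introduced for Theorems~\ref{thm:pspace} and~\ref{thm:expspace} is applied: replacing the classical multiplication triple $(x,y,z)$ by a two-counter invariant of quadratic type saves one dimension per amplification step, which is precisely what is needed to bring the construction from the previously known $10$ dimensions down to $8$. Zero tests on simulated counters are enforced in the standard way by maintaining a complement counter and checking that the sum matches the current bound, while detection of ``cheating'' runs can be folded into the quadratic-pair invariants instead of being handled by a separate counter.

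The main obstacle is dimension accounting: the construction must simultaneously accommodate the simulated counters together with their complements, the amplifier building the next level, and the mechanisms preventing illegal increments or decrements. Each of these components naively costs two or three dimensions, so the combinatorial work lies in scheduling them so their active phases are disjoint and their dimensions can be reused across phases. I would rely on the quadratic-pair invariants to let the complement counters double as auxiliary registers for the amplifier, and on the loop structure of the outer construction to recycle counters between successive rounds of amplification. Since every transition update is an absolute constant independent of the tower height, the resulting VASS is genuinely $8$-dimensional with unary-encoded updates, which establishes Theorem~\ref{thm:tower}.
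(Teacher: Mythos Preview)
Your high-level template (iterate an amplifier $n$ times to reach tower-size values, then simulate a bounded counter machine) is correct, but the specific mechanism you name for the amplifier is the wrong one, and this is not a cosmetic issue.

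You claim that the quadratic pairs technique is what brings the dimension from $10$ down to $8$. The paper explicitly does \emph{not} use quadratic pairs for Theorem~\ref{thm:tower}; it uses the multiplication-triples technique (MT) for the $7$-counter amplifier and the controlling-counter technique (CC) for the $8$th counter that glues the $n$ copies of the amplifier together. The reason quadratic pairs cannot play the role you assign it is structural. A pair $(2B,4B^2)$ allows $B$ zero-tests on counters that are \emph{$B$-bounded}; the number of tests and the bound on the tested counters are tied to the same parameter $B$, and the technique requires the boundedness assumption as a promise (it does not verify it). Inside the amplifier, however, the counters being zero-tested grow from $1$ up to $2^B$ during the multiplication loop, so they are certainly not $B$-bounded. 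What the amplifier needs is a \emph{small} number ($B/2$) of zero-tests on \emph{arbitrarily large} counters. This is precisely the ``twist'' on multiplication triples from~\cite{DBLP:journals/corr/abs-2105-08551}: a triple $(B,C,BC)$ with $C$ guessed nondeterministically (and hence unbounded) can be read either as $C/2$ tests on $B$-bounded counters or as $B/2$ tests on $C$-bounded counters. Quadratic pairs have no such symmetry, so your proposed engine for the amplifier would not work.

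Concretely, the paper builds a $7$-VASS amplifier that, from a triple $(B,C,BC)$ on counters $(x_1,x_2,x_3)$, produces a triple $(2^B,C',2^B C')$ on $(x_5,x_6,x_7)$ by initialising $(x_5,x_6,x_7)=(1,C',C')$ and then multiplying $x_5,x_7$ by $256$ exactly $B/8$ times, using $x_4$ as scratch and the input triple to enforce the $4\cdot B/8=B/2$ zero-tests. The $8$th counter is then a single controlling-counter (Lemma~\ref{lem:zero-testing}) that checks the $O(n)$ zero-tests needed at the seams between the $n$ sequential copies of the amplifier; because the number of such tests is linear in $n$, one controlling-counter suffices. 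That interplay between MT (inside each amplifier, many tests on huge counters) and CC (across amplifiers, linearly many tests) is the actual source of the dimension saving.
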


In order to prove our results we crucially exploit two known techniques designed to force counters of VASSes
to be equal to zero at some particular configurations along the run, namely simulate zero-tests on some counters.
The first technique is based on triples of the form $(B, C, BC)$ and was introduced in~\cite{DBLP:conf/stoc/CzerwinskiLLLM19} in order
to simulate $C / 2$ zero-tests for counters bounded by value $B$. This idea was later improved
in~\cite{DBLP:journals/corr/abs-2105-08551}~and~in~\cite{DBLP:journals/corr/abs-2104-13866} to handle many counters by just one triple.
Based on this technique we design our novel quadratic pair technique.
The second technique was introduced in~\cite{DBLP:journals/corr/abs-2104-13866} and uses a single controlling-counter
in order to perform a linear number of zero-tests. It turns out that none of these two tools dominate the other one, they
are useful in different situations.

\paragraph*{Organisation of the paper}
In Section~\ref{sec:prelim} we introduce preliminary notions and recall necessary facts about the above mentioned two techniques of zero-testing.
In Section~\ref{sec:prelim} we also introduce the quadratic pair technique and prove related facts about counter automata.
Then in Section~\ref{sec:overview} we briefly describe ideas beyond our proofs, in some cases it might be even sufficient to read
this section in order to understand in-depth our arguments.
In Sections~\ref{sec:np},~\ref{sec:pspace},~\ref{sec:expspace}~and~\ref{sec:tower}
we prove in detail Theorems~\ref{thm:np}, \ref{thm:pspace}, \ref{thm:expspace} and \ref{thm:tower}, respectively.
Finally in Section~\ref{sec:future} we comment about the limitations of our techniques and mention possible future research directions.

\section{Preliminaries}\label{sec:prelim}

\paragraph*{Basic notions}
For $a, b \in \N$ we write $[a,b]$ to denote the set $\{a, a+1, \ldots, b-1, b\}$.
For a vector $v \in \N^d$ and $i \in [1,d]$ we write $v[i]$ to denote the $i$-th coordinate
of vector $v$. By $0^d$ we denote vector $v \in \N^d$ with all coordinates equal to zero.

\paragraph*{Vector Addition Systems with States}
A $d$-dimensional Vector Addition System with States ($d$-VASS) consists of a finite
set of \emph{states} $Q$ and a finite set of transitions $T \subseteq Q \times \Z^d \times Q$.
\emph{Configuration} of a $d$-VASS is a pair $(q, v) \in Q \times \N^d$, we often write it $q(v)$ instead of $(q, v)$.
For a configuration $c = q(v)$ and $i \in [1,d]$ we denote by $c[i]$ value $v[i]$.
The set of all the configurations is denoted $\conf = Q \times \N^d$.
Transition $(p, u, q)$ can be \emph{fired} in a configuration $r(v)$ if $p = r$ and $u+v \in \N^d$.
We write then $p(v) \trans{(p,u,q)} q(u+v)$. The \emph{effect} of a transition $(p, u, q)$ is vector $u$,
we write $\eff((p, u, q)) = u$.
A sequence
$\rho = (c_1, t_1, c'_1), (c_2, t_2, c'_2), \ldots, (c_n, t_n, c'_n) \in \conf \times T \times \conf$
is a \emph{run} of VASS $V = (Q, T)$ if for all $i \in [1,n]$ we have $c_i \trans{t_i} c'_i$
and for all $i \in [1,n-1]$ we have $c'_i = c_{i+1}$.
We naturally extend the notion of the \emph{effect} to runs, $\eff(\rho) = \eff(t_1) + \ldots + \eff(t_n)$.
Such a run $\rho$ is \emph{from} configuration $c_1$ \emph{to} configuration $c'_n$.
We write then $c_1 \trans{\rho} c'_n$ slightly overloading the notation or simply $c_1 \reaches c'_n$
if there is some $\rho$ such that $c_1 \trans{\rho} c'_n$.
By $\reach(\src, V) = \{c \mid \src \reaches c\}$ we denote the set of all the configurations reachable from configuration $\src$
and we call it the \emph{reachability set}.
We also write simply $\reach(\src)$ is VASS $V$ is clear from the context.

The following problem is the main focus of this paper, for different values of $d \in \N$.

\begin{quote}\label{qu:problem}
\textbf{Reachability problem for $d$-VASSes}
\begin{description}
  \item[Input] A $d$-VASS $V$ and two its configurations $\src, \trg$
  \item[Question] Does $\src \reaches \trg$ in $V$?
\end{description}
\end{quote}

The size of VASS $V$, denoted $\size(V)$, is the total number of bits needed to represent states and transitions of $V$.
A \emph{state-cycle} in a VASS $V$ is a cycle in the graph $(Q,E)$ with vertices being states of $V$ and edges
being defined as $(p,q) \in E$ if there is some transition $(p,u,q) \in T$.
We say that a VASS $V$ is \emph{flat} if for each state $q \in Q$ there is at most one state-cycle in $V$ which contains $q$.
In other words a VASS is flat if there are no nested cycles in its state structure.
If numbers in transitions of a VASS are encoded in unary then we call it a \emph{unary VASS}. Similarly a \emph{binary VASS}
is a VASS with transitions encoded in binary.

\paragraph*{Counter programs}
A very useful formalism to describe some VASSes are counter programs.
A \emph{counter program} is a sequence of instructions of the form either \add{x}{a} or \textup{loop} P,
where $P$ is another counter program. Such a counter program with $d$ counters
can be transformed in a natural way to a corresponding $d$-VASS.
Thus in the rest of the paper in many places we use terms VASS and counter programs
almost interchangeably.
A precise definition can be found in~\cite{DBLP:journals/corr/abs-2104-13866},
we recall here examples provided in~\cite{DBLP:journals/corr/abs-2104-13866}.

\begin{example}
The following counter program
\begin{algorithmic}[1]
\State \add{\vr{x}}{1}
\Loop
\State \sub{\vr{x}}{1} \quad \add{\vr{y}}{1}
\EndLoop
\Loop
\State \add{\vr{x}}{2} \quad \sub{\vr{y}}{1}
\EndLoop
\Loop
\State \sub{\vr{x}}{1} \quad \add{\vr{y}}{1}
\EndLoop
\Loop
\State \add{\vr{x}}{2} \quad \sub{\vr{y}}{1}
\EndLoop
\end{algorithmic}
represents the $2$-VASS presented below, state names are chosen arbitrary.

\begin{tikzpicture}[->,>=stealth',shorten >=1pt,auto,node distance=2.5cm,semithick]
\node (s) at (0,0) {$s$};
\node (p1) [right of=s] {$p_1$};
\node (q1) [right of=p1] {$q_1$};
\node (p2) [right of=q1] {$p_2$};
\node (q2) [right of=p2] {$q_2$};

\path[->]
(p1) edge[->, in=50, out=130, min distance=0.5cm,loop] node {$(-1,1)$}(p1)
(q1) edge[->, in=50, out=130, min distance=0.5cm,loop] node {$(2,-1)$}(q1)
(p2) edge[->, in=50, out=130, min distance=0.5cm,loop] node {$(-1,1)$}(p2)
(q2) edge[->, in=50, out=130, min distance=0.5cm,loop] node {$(2,-1)$}(q2)

(s) edge[->] node[above] {$(1,0)$} (p1)
(p1) edge[->] node[above] {$(0,0)$} (q1)
(q1) edge[->] node[above] {$(0,0)$} (p2)
(p2) edge[->] node[above] {$(0,0)$} (q2);
\end{tikzpicture}
\end{example}

\noindent
We often use macro \textbf{for} $i$ := $1$ \textbf{to} $n$ \textbf{do},
by which we represent just the counter program in which the body of the for-loop
is repeated $n$ times. 

\begin{example}\label{ex:2VASS}
The following counter program uses the macro \textbf{for}. For $n = 2$ it is equivalent to the above example.
\begin{algorithmic}[1]
\State \add{\vr{x}}{1}
\For {\, $i$ \, := \, $1$ \, \textbf{to } $n$}\label{l:for}
\Loop
\State \sub{\vr{x}}{1} \quad \add{\vr{y}}{1}
\EndLoop
\Loop
\State \add{\vr{x}}{2} \quad \sub{\vr{y}}{1}
\EndLoop
\EndFor
\end{algorithmic}

\noindent
The counter program represents the following $2$-VASS.

\begin{tikzpicture}[->,>=stealth',shorten >=1pt,auto,node distance=2.5cm,semithick]
\node (s)  at (0,0) {$s$};
\node (p1) [right of=s] {$p_1$};
\node (q1) [right of=p1] {$q_1$};
\node (pn) [right of=q1] {$p_n$};
\node (qn) [right of=pn] {$q_n$};

\node (dot) at (6.25, 0) {$\ldots$};

\path[->]
(p1) edge[->, in=50, out=130, min distance=0.5cm,loop] node {$(-1,1)$}(p1)
(q1) edge[->, in=50, out=130, min distance=0.5cm,loop] node {$(2,-1)$}(q1)
(pn) edge[->, in=50, out=130, min distance=0.5cm,loop] node {$(-1,1)$}(pn)
(qn) edge[->, in=50, out=130, min distance=0.5cm,loop] node {$(2,-1)$}(qn)

(s) edge[->] node[above] {$(1,0)$} (p1)
(p1) edge[->] node[above] {$(0,0)$} (q1)
(pn) edge[->] node[above] {$(0,0)$} (qn);
\end{tikzpicture}
\end{example}

Sometimes we add to counter programs an instruction $P_1 \  \textup{or} \  P_2$, where $P_1$ and $P_2$ are counter
programs. It is easy to see that such an instruction can be as well easily simulated by a nondeterministic choice in VASSes.

\paragraph*{Bounded counter automata}
A counter automaton is a VASS with special zero-test transitions, which can be fired only if a particular counter has value exactly zero.
It is a folklore that reachability problem for counter automata is undecidable in general. However restricted versions of the
problem are very natural problems complete for natural complexity classes. We say that a run of a counter automaton
is $B$-\emph{bounded} if \emph{the sum of all the counters} on that run has values smaller than $B$. 
Notice that here we use a bit unusual notion of boundedness: we demand the sum of all the counters to be bounded by $B$,
not every single counter by itself. This is however only a small technical change.
A run is \emph{accepting} if it starts in the distinguished initial state with all the counters equal to zero and
finishes in the distinguished accepting state also with all the counters equal to zero.
Consider the following problem:
\begin{quote}\label{qu:bounded-problem}
\textbf{The $f$-bounded reachability problem for $d$-counter automata}
\begin{description}
  \item[Input] A $d$-counter automaton $\A$, number $n \in \N$ given in unary
  \item[Question] Does $\A$ have an $f(n)$-bounded accepting run?
\end{description}
\end{quote}
The following theorem is a folklore. The proof can be found in~\cite{FischerMR68} (Theorem 3.1)
while in~\cite{DBLP:journals/toct/Schmitz16} (Section 4.1) 
it is argued that small modifications in the definition of the $\tower$ function do not change the class.

\begin{theorem}\label{thm:bounded-reachability}
The $f$-bounded reachability problem for three-counter automata is
\tower-complete for $f(n) = \tower(n)$ defined as $\tower(1) = 2$, $\tower(n+1) = 2^{\tower(n)}$ for any $n > 1$.
\end{theorem}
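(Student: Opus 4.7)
The plan is to establish both directions: membership in \tower and \tower-hardness. The upper bound is a routine reachability computation on a bounded configuration graph; the lower bound is the folklore Minsky-style encoding of a space-bounded Turing machine into three counters, essentially as carried out in~\cite{FischerMR68}.

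For the upper bound I would observe that any configuration contributing to an $f(n)$-bounded run consists of a state of $\A$ together with three non-negative integers whose sum is at most $\tower(n)$. There are therefore at most $|Q|\cdot(\tower(n)+1)^3$ such configurations, where $Q$ is the state set of $\A$. Standard non-deterministic graph reachability on this explicit configuration graph can be decided in space $O(\log|Q|+3\log\tower(n))=O(\tower(n-1))$, which lies inside the class \tower. Inessential differences between the specific recurrence $\tower(1)=2$, $\tower(n+1)=2^{\tower(n)}$ used here and other common choices only shift the argument by a constant and do not leave the class, as discussed in~\cite{DBLP:journals/toct/Schmitz16}.

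For the lower bound I would reduce from acceptance of a deterministic Turing machine $M$ on input $w$ whose space usage is guaranteed to be at most $s:=\tower(|w|)$; by a standard padding argument this problem is already \tower-hard. Given such $(M,w)$, the construction produces a three-counter automaton $\A$ together with $n=|w|+k$ for some fixed constant $k$ such that $M$ accepts $w$ iff $\A$ has an $f(n)$-bounded accepting run. The tape of $M$ is split at the head into two binary words of length at most $s$ and stored in counters $c_1,c_2\in[0,2^s)$, with the bit closest to the head treated as the least significant. The currently scanned symbol and the control state of $M$ live in the finite control of $\A$, while the third counter $c_3$ is used purely as scratch. A single Turing-machine transition expands into a short macro whose arithmetical core is $c_1\mapsto 2c_1+(c_2\bmod 2)$ and $c_2\mapsto\lfloor c_2/2\rfloor$ for a right move (symmetric for a left move), implemented by a two-pass halve-and-double sub-routine that shuttles bits between $c_2$ and $c_3$ while doubling $c_1$, with an explicit zero-test on $c_3$ closing each macro. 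Picking $k$ large enough makes $\tower(n)$ comfortably larger than $3\cdot 2^s$, so the entire simulation is $f(n)$-bounded.

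The main obstacle is not the counter arithmetic itself but honestly discharging the zero-tests at the seams between consecutive macros while keeping the \emph{sum} of all three counters below $\tower(n)$, rather than bounding each counter separately. The customary halve-and-double pattern addresses both concerns at once: processing the transfer one bit at a time ensures that $c_3$ is verifiably zero whenever control returns to the idle state between macros, and the bit-by-bit schedule keeps $c_1+c_2+c_3$ essentially equal to $c_1+c_2$ throughout. Everything else is routine bookkeeping: the finite control of $\A$ has polynomial size in $M$, acceptance of $M$ corresponds to $\A$ entering a designated accepting state with all counters zero, and the resulting accepting $f(n)$-bounded run of $\A$ exists if and only if $M$ accepts $w$.
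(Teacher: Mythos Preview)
The paper does not actually give its own proof of this theorem: it records it as folklore, pointing to Theorem~3.1 of~\cite{FischerMR68} for the construction and to~\cite{DBLP:journals/toct/Schmitz16} for robustness of the \tower class under the particular choice of the $\tower$ function. Your sketch is precisely that folklore argument---bounded configuration-graph reachability for membership, and the Fischer--Meyer--Rosenberg simulation of a space-bounded Turing machine by three counters (two tape halves in positional notation plus one scratch counter) for hardness---so there is nothing to contrast. Your handling of the two paper-specific conventions (that boundedness refers to the \emph{sum} of the counters, and that accepting runs start and end at $0^3$) is appropriate; the only cosmetic omission is that you do not spell out how the input word $w$ is loaded into the counters from the all-zero start, but that is indeed routine bookkeeping of the same polynomial size as the rest of the construction.
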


Theorem~\ref{thm:bounded-reachability} will be used in the \tower-hardness proof in Section~\ref{sec:tower}.
Actually in the case of $f = \tower$ even the problem for two-counter automaton is \tower-complete, but this
simplification is not needed in our construction.

For the \pspace-hardness and \expspace-hardness proofs in Sections~\ref{sec:pspace}~and~\ref{sec:expspace}, respectively
we need a more subtle problem. We call a counter automata to be \emph{$B$-bounded} if all its accepting runs are $B$-bounded.
Let us consider the following promise problem:
\begin{quote}\label{qu:bounded-problem}
\textbf{The reachability problem for $f$-bounded $d$-counter automata}
\begin{description}
  \item[Input] An $f(n)$-bounded $d$-counter automaton $\A$, number $n \in \N$ given in unary
  \item[Question] Does $\A$ have an accepting run?
\end{description}
\end{quote}
Notice that the assumption of $f$-boundedness makes the reachability problem for $f$-bounded counter automata
easier than the problem of $f$-bounded reachability for not necessarily bounded counter automata.
Indeed, if one can check $f$-bounded reachability for any counter automata then
in particular for $f$-bounded counter automata, for which it is equivalent to the reachability problem.
Thus the following theorem is harder to prove in our setting of the promise problem than in the more classical scenario.

\begin{theorem}\label{thm:bounded-counters}
The reachability problem for $f$-bounded $d$-counter automata is
\begin{enumerate}
  \item \pspace-hard for $f(n) = 2^n$ and $d = 2$
  \item \expspace-hard for $f(n) = 2^{2^n}$ and $d = 3$.
\end{enumerate}
\end{theorem}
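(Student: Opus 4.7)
The plan is to reduce from acceptance of space-bounded Turing machines: a polynomial-space TM acceptance problem (\pspace-complete) for part 1 and an exponential-space TM acceptance problem (\expspace-complete) for part 2. In both cases, given a TM $M$ and input $w$ of length $N$, where $M$ uses space $S$, I construct a counter automaton $\A$ that simulates $M$ step-by-step, with counter values encoding the TM's tape and head position. The sum of counters stays within roughly $2^S$, matching $f(n) = 2^n$ (for part 1) or $f(n) = 2^{2^n}$ (for part 2) when $n$ is chosen as a suitable polynomial in $N$. Both $\size(\A)$ and the unary encoding of $n$ are polynomial in $N$.

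For part 1 with $S = p(N)$ polynomial, head positions can be absorbed into the state space since there are only polynomially many. Counter $x$ encodes the tape contents as a binary number of value at most $2^S$, while counter $y$ is used as scratch. One TM step is implemented by a subroutine that (i) reads the bit of $x$ at the position indicated by the state, (ii) possibly rewrites it, and (iii) adjusts the head position. This is achieved via modular-arithmetic gadgets: $x$ is transferred to $y$ in blocks, using zero-tests to detect completion, which both extracts the required digit and allows restoration. During these transfers, $x + y \le 2 \cdot 2^S$, so $\A$ is $2^{S+1}$-bounded; setting $n = S + 1$ yields the $2^n$ bound. For part 2 with $S = 2^{p(N)}$ exponential, the head position can no longer be kept in the state; a third counter tracks it (value at most $2^{p(N)}$), the tape counter now ranges up to $2^{2^{p(N)}}$, and the remaining counter serves as scratch. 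The TM-step simulation is analogous but accesses the tape at the position stored in the head-position counter, requiring a subroutine that reads and restores the tape counter while decrementing the head-position counter. The total sum remains $O(2^{2^{p(N)}})$, matching $2^{2^n}$ for $n = p(N) + O(1)$.

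The main obstacle is enforcing the $f$-boundedness \emph{promise} on every accepting run rather than only on the intended simulation runs. Since this is a promise problem, the construction must guarantee that any sequence of transitions reaching the accepting state has bounded counter sum. This is ensured by designing the subroutines so that zero-tests enforce the intended modular behavior: any deviation from the TM simulation either stalls (no applicable transition) or leads to a state from which the accepting configuration is unreachable. The zero-tests thus play a dual role, both enabling the TM simulation and ruling out spurious accepting runs that could violate the boundedness bound. Correctness of the reduction then follows: $M$ accepts $w$ iff $\A$ has an accepting run, and every such run respects the prescribed bound on the counter sum.
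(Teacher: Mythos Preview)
Your high-level plan (reduce from acceptance of space-bounded Turing machines, encode the tape in the counters) is exactly the right one and matches the paper. The gap is in the concrete encodings you chose.

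For part 1 you encode the tape as a binary number $x=\sum_j a_j 2^j$ and keep the head position $i$ in the state. But then a single TM step needs to read and rewrite bit $i$ of $x$. With only two counters this is the problem: to isolate bit $i$ you must split $x$ into the block below position $i$ and the block from position $i$ upward, manipulate the latter, and then recombine. That means holding two exponentially large quantities simultaneously while also doing the arithmetic; one scratch counter is not enough, and you cannot hardwire $2^i$ into the transitions (there are $S$ positions and $2^i$ is exponential). ``Transfer in blocks'' either needs block size $2^i$ (exponential transitions) or block size $2$ repeated $i$ times (which loses the low-order bits). The paper avoids this by using a G\"odel-style prime encoding $x=p_1^{a_1}\cdots p_n^{a_n}$, where $p_1,\ldots,p_n$ are the first $n$ primes. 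Since the $p_i$ are coprime, cell $i$ can be read and rewritten by dividing and multiplying by the \emph{polynomially bounded} constant $p_i$ using the second counter purely as scratch, without disturbing the other cells. The Prime Number Theorem then gives the $2^{\mathrm{poly}(n)}$ bound on $x$.

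For part 2 you propose counters (tape, head-position, scratch). The same difficulty reappears: reading the bit at position $h$ (now stored in a counter) again requires keeping the low part, the high part, and the still-needed value of $h$ around while performing divisions, which is one quantity too many for three counters. The paper instead uses the standard two-stack layout: one counter encodes the tape to the left of the head, another encodes the tape to the right, and the third is scratch. A head move is then a single multiplication/division by the base, which one scratch counter handles cleanly, and the bound $2^{2^{\mathrm{poly}(n)}}$ follows directly.

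Your discussion of the $f$-boundedness promise is fine in spirit; once the encoding is fixed as above, the zero-tests indeed make the simulation deterministic enough that every accepting run of $\A$ corresponds to an actual TM run and hence respects the bound.
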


The proof of Theorem~\ref{thm:bounded-counters} can be found in the Appendix.

In the next two paragraphs we present two different techniques, which can be used to simulate zero-tests in bounded counter automata
by a VASS without zero-tests.

\paragraph*{Controlling-counter technique}
Here we describe the technique of controlling-counter presented in~\cite{DBLP:journals/corr/abs-2104-13866}.
The essence of this technique is to add a new counter, called \emph{controlling-counter}, which is modified
in an appropriate way in the existing transitions and demanded to have value zero in both source and target configurations
of the run. This enforces that some other counters need to have zero values in particular configurations along the run.
If a counter is forced to be zero at some moment of the run we say that a \emph{zero-test} is performed on that counter
at this moment or it is \emph{zero-tested}.

Assume that configurations $c_1, \ldots, c_n$ are some of the configurations on run $\rho$ from configuration $\src$
to configuration $\trg$ and let
\[
c_0 \trans{\rho_1} c_1 \trans{\rho_2} \ldots \trans{\rho_n} c_n \trans{\rho_{n+1}} c_{n+1}.
\]
Let counter $x$ have value zero at both source $c_0$ and target $c_{n+1}$ of the run $\rho$ and let values
of counter $x$ in configurations $c_1, \ldots, c_n$ be $x_1, \ldots, x_n$ respectively, namely $c_i[x] = x_i$ for all $i \in [1,n]$.
Let $x'_i$ be the effect of run $\rho_i$ on counter $x$, namely $x'_1 = x_1$ and $x'_i = x_i - x_{i-1}$ for $i \in [2,n]$.
Clearly in order to assure $x_1 = x_2 = \ldots = x_n = 0$
it is enough to assure $x_1 + \ldots + x_n = 0$. Notice that for each $i \in [1,n]$ we have $x_i = x'_1 + x'_2 + \ldots + x'_i$.
Therefore
\[
x_1 + \ldots + x_n = n x'_1 + (n-1) x'_2 + \ldots + 2 x'_{n-1} + x'_n.
\]
Thus if there is a controlling-counter $y$ with the property that $c_0[y] = 0$ and for each $i \in [1,n]$ we
have $\eff(\rho_i)[y] = (n+1-i) \cdot \eff(\rho_i)[x]$ then we have that
\[
c_{n+1}[y] = n x'_1 + (n-1) x'_2 + \ldots + 2 x'_{n-1} + x'_n = x_1 + \ldots + x_n.
\]
Therefore $\trg[y] = 0$ implies that $c_i[x] = 0$ for all $i \in [1,n]$.

This idea can be extended to one counter controlling many counters. Here we recall Lemma 10 from~\cite{DBLP:journals/corr/abs-2104-13866} stating this generalised version, which will be used in our proofs.

\begin{lemma}\label{lem:zero-testing}
Let $\src \trans{\rho} \trg$ be a run of a $(d+1)$-VASS $V$
and let $\src = c_0, c_1, \ldots, c_{n-1}, c_n = \trg$ be some of the configurations on $\rho$.
Let $\rho_j$ for $j \in [1,n]$ be the parts of the run $\rho$ starting in $c_{j-1}$ and finishing in $c_j$, namely
\[
c_0 \trans{\rho_1} c_1 \trans{\rho_2} \ldots \trans{\rho_{n-1}} c_{n-1} \trans{\rho_n} c_n.
\]
Let $S_1, \ldots, S_d \subseteq [0,n]$ be the sets of indices of $c_j$, in which we want to zero-test counters numbered $1, \ldots, d$, respectively
and let $N_{j,i} = |\{k \geq j \mid k \in S_i\}|$ for $i \in [1,d], j \in [0,n]$ be the number of zero-tests, which we want to perform
on the $i$-th counter starting from configuration $c_j$ (in other words after the run $\rho_j$ for $j > 0$).
Then if:
\begin{enumerate}
  \item[(1)] $\src[d+1] = \sum_{i=1}^d N_{0,i} \cdot \src[i]$;
  \item[(2)] for each $j \in [1,n]$ we have $\eff(\rho_j, d+1) = \sum_{i=1}^d N_{j,i} \cdot \eff(\rho_j, i)$; and
  \item[(3)] $\trg[d+1] = 0$
\end{enumerate}
then for each $i \in [1,d]$ and for each $j \in S_i$ we have $c_j[i] = 0$.
\end{lemma}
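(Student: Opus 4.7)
The plan is to directly generalise the single-counter calculation that was just carried out in the preceding paragraph, by considering the total sum $T = \sum_{i=1}^{d} \sum_{j \in S_i} c_j[i]$. Since every $c_j[i]$ lies in $\N$, if I can show $T = 0$ then every individual summand must vanish, which is exactly the conclusion of the lemma. So the whole task reduces to proving $T = 0$, and the three hypotheses are precisely what is needed for that.

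First I would expand each $c_j[i]$ telescopically as $c_j[i] = \src[i] + \sum_{k=1}^{j} \eff(\rho_k, i)$, plug this into $T$, and swap the two summations over $j$ and $k$. For fixed $i$ and fixed $k \in [1,n]$, the effect $\eff(\rho_k, i)$ appears with coefficient $|\{j \in S_i : j \geq k\}| = N_{k,i}$, and the initial value $\src[i]$ appears with coefficient $|S_i| = N_{0,i}$. This gives
\begin{equation*}
T \;=\; \sum_{i=1}^{d} N_{0,i} \cdot \src[i] \;+\; \sum_{k=1}^{n} \sum_{i=1}^{d} N_{k,i} \cdot \eff(\rho_k, i).
\end{equation*}

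Next I would apply hypotheses (1) and (2) term by term: the first sum equals $\src[d+1]$, and each inner sum $\sum_i N_{k,i} \cdot \eff(\rho_k, i)$ equals $\eff(\rho_k, d+1)$. Hence $T = \src[d+1] + \sum_{k=1}^{n} \eff(\rho_k, d+1) = \trg[d+1]$, which is $0$ by hypothesis (3). Since each $c_j[i] \in \N$ and $T = \sum_{i,j \in S_i} c_j[i] = 0$, every summand vanishes, so $c_j[i] = 0$ for all $i \in [1,d]$ and $j \in S_i$, as required.

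The only subtle point is the index bookkeeping when swapping the two summations and reading off the coefficient $N_{k,i}$; I expect this to be the main obstacle in the sense that it is the single place where a mistake is easy to make, but it is entirely mechanical. Note that the argument uses nothing about the run beyond (a) the fact that effects add along the run and (b) the non-negativity of counter values at each configuration, so there is no case analysis or inductive argument needed.
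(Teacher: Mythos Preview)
Your proposal is correct and is precisely the natural multi-counter generalisation of the single-counter calculation sketched in the paragraph preceding the lemma. The paper itself does not prove this lemma but merely recalls it from~\cite{DBLP:journals/corr/abs-2104-13866}; your argument is exactly the intended one, and the index bookkeeping (the identification $|\{j \in S_i : j \geq k\}| = N_{k,i}$ and $|S_i| = N_{0,i}$) is handled correctly.
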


\paragraph*{Multiplication triples technique}
The technique of multiplication triples was introduced in~\cite{DBLP:conf/stoc/CzerwinskiLLLM19}.
If values of counter $x$ along run $\rho$ are upper-bounded by $B$ then we say that $x$ is $B$-\emph{bounded} on $\rho$.
The essence of this idea is that a VASS starting with some three counters $b$, $c$ and $d$ having values $B$, $C$ and $BC$, respectively,
can perform $C/2$ zero-tests on a $B$-bounded counter.

Let us introduce a macro $\flush(x, y, z)$, which stands for a counter program:
\begin{algorithmic}[1]
\Loop
\State \sub{x}{1} \quad \add{y}{1} \quad \sub{z}{1}
\EndLoop
\end{algorithmic}
In other words $\flush(x, y, z)$ transfers value of counter $x$ to counter $y$ (but maybe not the whole value)
while keeping value $x+y$ constant and decreases counter $z$ by the transferred value.

Now it is easy to see how a zero-test on a $B$-bounded counter $x$ can be performed.
Assume as above that values of counters $(b, c, d)$ are $(B, C, BC)$
and initial value of $x$ is $0$. Then initially $x+b = B$ and as $x$ is $B$-bounded we can
keep this invariant along the run by decreasing $b$ when $x$ is increased and increasing $b$ when $x$ is decreased.
Then zero-test on $x$ is performed as follows:
\begin{algorithmic}[1]
\State \flush(b, x, d)
\State \flush(x, b, d)
\State \sub{c}{2}
\end{algorithmic}
In order to see that the above counter program indeed zero-tests $x$ notice that maximal decrease of $d$ during this program
is $2B$ and it is so only if $x = 0$, $b = B$ at the beginning of the program and both flushes were fully realised (so in particular
$x = 0$ also at the end of the program). Counter $c$ is decreased by $2$ in that program, therefore it can be fired at most $C / 2$
times, as the initial value of $c$ equals $C$. Thus in order to reach $d = 0$ at the end of the program each firing of zero-test must
result in decreasing $d$ by exactly $2B$. This in turn implies that zero-test can be indeed fired only if $x = 0$.

An extension of this technique to many counters zero-tested by the use of just one triple $(b, c, d)$ was introduced in~\cite{DBLP:journals/corr/abs-2104-13866} and elegantly described by Lasota in~\cite{DBLP:journals/corr/abs-2105-08551}.
We recall the argument here in order to be self-contained.
Assume now that we have $m$ counters $x_1, \ldots, x_m$ which all have value zero at the beginning of the counter program
and their sum $x_1 + \ldots + x_m$ is bounded by $B$ along the run. Then triple $(B, C, BC)$ allows for $C / 2$ zero-tests
on any of $x_1, \ldots, x_m$. We show how to perform zero-test on counter $x_1$, zero-testing other counters is very similar,
we comment about it in a moment.
\begin{algorithm}
\caption{}
\label{alg:zero-test}
\begin{algorithmic}[1]
\State \flush($x_2$, $x_1$, d)
\State \flush($x_3$, $x_2$, d)
\State \ldots
\State \flush($x_m$, $x_{m-1}$, d)
\State \flush(b, $x_m$, d)
\State \flush($x_m$, b, d)
\State \flush($x_{m-1}$, $x_m$, d)
\State \ldots
\State \flush($x_2$, $x_3$, d)
\State \flush($x_1$, $x_2$, d)
\State \sub{c}{2}
\end{algorithmic}
\end{algorithm}
Notice that in lines 1-5 we are flushing values from counters with bigger indices to counters with smaller indices
and in lines 6-10 we do the same process backwards.

The main idea is similar as above: we argue that $d$ can be decreased maximally by $2B$ by the zero-test program
and if it is decreased exactly by $2B$ then $x_1 = 0$ at the moment of zero-test and values of all the counters $x_i$ and $b$
are the same before and after the zero-test. Clearly the rest of the argument works as before, so it suffices to show the above property.
Let us denote for a moment counter $b$ by $x_{m+1}$, let $a_i$ be the value of $x_i$ at the beginning of the program
and $a'_i$ be the value of $x_i$ after $\flush$ in line 5. Clearly 
$a_1 + \ldots + a_m + a_{m+1} = a'_1 + \ldots + a'_m + a'_{m+1} = B$.
Notice that total decrease of $d$ in lines 1-5 is bounded by $a_2 + \ldots + a_{m+1}$ and
total decrease of $d$ in lines 6-10 is bounded by $a'_1 + \ldots + a'_m$. Therefore
total decrease is bounded by:
$a_2 + \ldots + a_{m+1} + a'_1 + \ldots + a'_m = B - a_1 + B - a'_{m+1} = 2B - (a_1 + a'_{m+1})$.
Thus clearly total decrease of $z$ is at most $2B$ and it equals $2B$ if: 1) $a_1 = a'_{m+1} = 0$;
and 2) all the flushes are fully realised. One can easily see that if all the flushes are fully realised
and $a_1 = 0$ then final values of $x_i$ are the same as the original ones, so the zero-test indeed works are required.
In order to zero-test counter different than $x_1$, say $x_i$ we perform the same procedure, but we apply flushes in different
order such that $x_i$ takes the place of counter $x_1$.

Recall now that an accepting run of a bounded counter automaton is from the distinguished initial state with all counters
having zero values to the distinguished final state with all counters having zero values.
Thus we can summarise the reasoning described above in the following lemma.
\begin{lemma}\label{lem:triples}
For each $d$-counter automaton $\A$
which on its $B$-bounded accepting run fires at most $C$ zero-tests
one can construct a unary $(d+3)$-VASS $V$ with two distinguished states $q_I, q_F$ such that:
$\A$ has an accepting run if and only if
there is a run from $q_I(B, 2C, 2BC, 0^d)$ to $q_F(B, 0^{d+2})$ in $V$.
\end{lemma}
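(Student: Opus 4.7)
The plan is to build the VASS $V$ by translating each transition of the counter automaton $\A$ individually, while reserving three additional counters $b$, $c$, $z$ for the roles of $B$, $2C$ and $2BC$ from the multiplication triples technique just recalled; the $d$ original counters of $\A$ become $x_1, \ldots, x_d$ in $V$. Every non-zero-test transition of $\A$ with effect $(v_1, \ldots, v_d)$ on the original counters translates into a VASS transition with effect $\bigl(-\sum_i v_i,\, 0,\, 0,\, v_1, \ldots, v_d\bigr)$ on $(b, c, z, x_1, \ldots, x_d)$. The correction on $b$ is designed to maintain the invariant $b + \sum_i x_i = B$ throughout the VASS run, so that $b \ge 0$ is equivalent to $\sum_i x_i \le B$, which is exactly the $B$-boundedness condition on the simulated $\A$-run.

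Each zero-test transition of $\A$ on counter $x_i$ is replaced by the block of Algorithm~\ref{alg:zero-test} with $x_i$ playing the role of $x_1$, followed (if the zero-test carries a non-trivial effect) by the non-zero-test translation of that effect. By the analysis recalled above, such a block decreases $c$ by $2$ and $z$ by at most $2B$, with equality on the $z$-decrease exactly when $x_i = 0$ at the block's start and every flush is fully carried out (in which case $b$ and the $x_j$'s are restored to their pre-block values, preserving the invariant). Since an accepting run of $\A$ may fire strictly fewer than $C$ zero-tests, I would additionally attach at the accepting state of $\A$ a gadget running Algorithm~\ref{alg:zero-test} on $x_1$ and returning to the same state, followed by a zero-cost edge to the new distinguished state $q_F$; this allows $V$ to insert arbitrarily many ``dummy'' zero-test blocks at the very end of the run, where $x_1 = 0$ is guaranteed by the fact that $\A$'s accepting run ends with all counters at $0$.

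For the forward direction, I would simulate a $B$-bounded accepting run of $\A$ using $k \le C$ zero-tests and then execute the padding gadget $C - k$ additional times. The invariant and $B$-boundedness ensure that all intermediate counter values are nonnegative and that each zero-test block (genuine or padding) is valid, so that at the very end $c$ and $z$ have lost exactly $2C$ and $2BC$ respectively, reaching $q_F(B, 0^{d+2})$ as required. For the backward direction, note that $c$ only changes inside zero-test blocks (by $-2$ each) and $z$ only inside such blocks (by at most $-2B$ each); reaching $c = 0$ forces exactly $C$ blocks to have been executed, and then reaching $z = 0$ forces each block to realise the maximal decrease $2B$, which means each must be a legitimate zero-test on a counter already at $0$. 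Combined with the invariant $b + \sum_i x_i = B$, which forces $B$-boundedness of the $\A$-run extracted from the VASS run, this yields a genuine accepting run of $\A$.

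The main point requiring care is the global arithmetic on $z$: the equivalence rests on the fact that any ``cheated'' block strictly loses less than $2B$ on $z$, so that the total deficit of $2BC$ cannot be closed unless every single one of the $C$ blocks behaves as a correct zero-test. This is precisely what the analysis of Algorithm~\ref{alg:zero-test} recalled above provides, and it is exactly what dictates the choice of initial values $2C$ and $2BC$ for $c$ and $z$ and the requirement $b = B$ at both endpoints in the statement of the lemma.
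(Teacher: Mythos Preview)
Your proposal is correct and follows essentially the same approach as the paper: the lemma is stated there as a direct summary of the preceding discussion of the multiplication triples technique (invariant $b+\sum_i x_i=B$, Algorithm~\ref{alg:zero-test} for each zero-test, and the $2B$-tightness argument), and your write-up simply spells out the construction and both directions of the equivalence in detail. The one point you make explicit that the paper leaves implicit is the padding gadget needed to absorb the slack when the accepting run uses strictly fewer than $C$ zero-tests; this is indeed required for the ``at most $C$'' formulation and your placement of it at the accepting state (where all $x_i=0$) is the natural choice.
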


\paragraph*{Quadratic pairs technique}
We emphasise here that in order to apply this technique we need to work with $B$-bounded counter automata,
rather than with $B$-bounded runs of not necessarily bounded counter automata, in contrast to the multiplication triple technique.
This is because in the multiplication triple technique the counters are checked to be bounded, while in the quadratic pairs technique
the counters are not checked to be bounded, we need to know in advance that they are $B$-bounded.
The essence of this idea is that a VASS starting with some two counters $b$ and $c$ having values $2B$ and $4B^2$, respectively,
can perform $B$ zero-tests on a $B$-bounded counter.

We first illustrate this technique for one counter $x$ and then show how to easily
generalise it to more counters. Assume that values of $(b, c)$ are $(B, B^2)$ and initial value of $x$ is $0$. Then initially $x + b = B$,
thus we have that $(x+b)^2 = B^2 = c$. The idea of the technique is that we keep the invariant $(x+b)^2 = c$
along the run as long as all the performed zero-tests are correct.
If at some moment an incorrect zero-test is fired then $(x+b)^2 < c$ and this inequality holds till the end of the run
implying in particular that $0 < c$. Thus checking $c = 0$ at the end of the run shows that all the performed zero-tests were correct.

The zero-test on $x$ is performed as follows:
\begin{algorithmic}[1]
\State \flush(b, x, c)
\State \flush(x, b, c)
\State \sub{b}{1} \quad \add{c}{1}
\end{algorithmic}
If initially $c = B^2$ and $x + b = B$ then after lines 1-2 still $x + b = B$ and $c \geq B^2 - 2B$ where the equality holds
if and only if both flushes were fully realised.
Thus after line 3 we have $x + b = B-1$ and $c \leq B^2 - 2B + 1 = (B-1)^2 = (x+b)^2$ and the equality holds iff both flushes were fully
realised, so in particular the zero-test was correct.

Thus after $\ell \leq B/2$ zero-tests performed on $x$ we have $b + x = B - \ell \geq B - B/2 = B/2$.
As we know that $x$ is $B/2$-bounded then applying at most $B/2$ zero-tests on $x$ is possible (as $x <= B - B/2 = B/2$).
We know after $\ell \leq B/2$ zero-tests that they were correct if $(x + b)^2 = c$. But after $\ell$ zero-tests $x + b = B - \ell$, so it is
not immediately how to check whether equality $(x+b)^2 = c$ holds. We check it by performing artificial zero-tests at the end of the run.
Namely as the last step we allow for arbitrary decrease of counter $x$ and arbitrarily many artificial zero-tests.
After each artificial zero-test the following invariant is kept: all the zero-tests are correct only if $(x+b)^2 = c$, otherwise
$(x+b)^2 < c$; and additionally if all the zero-tests are correct then it is possible to have $(x+b)^2 = c$.
Thus in order to check whether all the zero-tests were correct it is enough to check at the very end whether $c = 0$, similarly
as in the multiplication triple technique.

One can easily observe that extending this technique to many counters $x_1, \ldots, x_m$ which are $B/2$-bounded (recall that this means that its
sum is bounded by $B/2$) is straightforward. The only modification is the implementation of zero-tests which decrease the counter $c$
exactly by two times of the current value of $b + x_1 + \ldots + x_\ell$.
This is realised exactly as in the multiplication triple technique, namely as presented in Algorithm~\ref{alg:zero-test}.
The above reasoning can be summarised in the following lemma.

\begin{lemma}\label{lem:pairs}
For each $B$-bounded $d$-counter automaton $\A$ which on its accepting run fires at most $B$ zero-tests
one can construct in polynomial time a unary $(d+2)$-VASS $V_\A$
with two distinguished states $q_I, q_F$ such that
the following are equivalent:
\begin{enumerate}
  \item $\A$ has an accepting run
  \item there is a run from $q_I(2B, 4B^2, 0^d)$ to $q_F(0^{d+2})$ in $V_\A$.
\end{enumerate}
\end{lemma}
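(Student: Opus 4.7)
The plan is to make rigorous the informal discussion preceding the lemma. The construction of $V_\A$ takes state set $Q_\A \cup \{q_I, q_F, q_{\textup{cu}}\}$ together with auxiliary states used by the gadgets. I connect $q_I$ to $\A$'s initial state and $\A$'s accepting state to a cleanup state $q_{\textup{cu}}$ by effect-zero transitions. Each non-zero-test $\A$-transition with counter effect $v \in \Z^d$ is installed with effect $v$ on the simulated counters, $-\sum_j v[j]$ on the controlling counter $b$, and $0$ on $c$, so that the sum $S := b + x_1 + \ldots + x_d$ is invariant under simulated $\A$-transitions. Each zero-test on $x_i$ becomes a gadget consisting of the flush chain of Algorithm~\ref{alg:zero-test} (with $x_i$ in the role of $x_1$) followed by the pair of updates $b\subeq 1,\ c\addeq 1$. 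At $q_{\textup{cu}}$ I install $d$ self-loop gadgets, one per $x_j$, together with a zero-effect transition to $q_F$; all numbers appearing in the construction are $\pm 1$, so $V_\A$ is a unary $(d+2)$-VASS built in polynomial time.

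For the forward direction, given an accepting run of $\A$, I simulate it in $V_\A$ and choose every flush inside each gadget to transfer the maximal feasible amount. An induction on the number $\ell$ of gadgets fired maintains three joint invariants: the $d$ simulated counters match $\A$'s corresponding values, $S = 2B - \ell$, and $c = S^2$. After the simulation ends at $q_{\textup{cu}}$ we have $x_j = 0$ for all $j$, $b = S = 2B - k$ for the number $k \leq B$ of zero-tests fired by $\A$, and $c = S^2$. Firing $2B - k$ further maximal gadgets on $x_1$ -- legal because $x_1 = 0$ and is preserved under perfect play -- then reduces both $b$ and $c$ to $0$, enabling the transition to $q_F(0^{d+2})$.

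The backward direction rests on the potential $\Phi := c - S^2$, which equals $0$ at both $q_I(2B, 4B^2, 0^d)$ and $q_F(0^{d+2})$. Simulated $\A$-transitions change neither $S$ nor $c$ and therefore preserve $\Phi$. For a single gadget invocation $\Delta S = -1$ and $\Delta c = -(\textrm{total flush})+1$; combining this with the bound $\textrm{total flush} \leq 2S - a_1 - a'_{m+1}$ recalled in the paragraph on Algorithm~\ref{alg:zero-test} gives $\Delta\Phi \geq a_1 + a'_{m+1} \geq 0$, with equality exactly when the tested counter satisfies $a_1 = 0$ at entry and both flush sweeps are completely realised; in the equality case the remaining counters are also restored to their pre-gadget values. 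Since $\Phi$ ends at $0$ and is weakly monotone along any run, $\Phi \equiv 0$, so every gadget must be perfect, every simulated zero-test acts on a counter of true value $0$, and the $x_j$'s are unchanged throughout the cleanup. The last observation combined with $x_j = 0$ at $q_F$ forces $x_j = 0$ already at the $\A$-accepting state, so the projection of the $V_\A$-run onto its simulated $\A$-transitions yields an accepting run of $\A$.

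The main obstacle I expect is the multi-counter flush bound $\textrm{total flush} \leq 2S - a_1 - a'_{m+1}$ together with the identification of when it is tight; the paragraph on Algorithm~\ref{alg:zero-test} supplies the mass-conservation argument for it, and the only genuine novelty here is verifying that replacing the final $c\subeq 2$ by the pair $b\subeq 1,\ c\addeq 1$ turns the linear telescoping of the triples technique into the square telescoping $\Delta(S^2) = -2S+1$ that keeps $\Phi$ monotone. All remaining steps amount to careful bookkeeping of $\Phi$ through the three kinds of transitions above.
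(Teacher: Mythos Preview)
Your proposal is essentially the paper's own argument made precise: the potential $\Phi=c-S^2$ is exactly the paper's invariant ``$(b+\sum_j x_j)^2\le c$, with equality iff every zero-test so far was correct'', the gadget is Algorithm~\ref{alg:zero-test} followed by $b\subeq 1,\ c\addeq 1$, and your cleanup self-loops at $q_{\textup{cu}}$ are the paper's ``arbitrarily many artificial zero-tests''.

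There is, however, one small step that fails as written (the paper's informal description glosses over the same point). The very last cleanup gadget is entered with $S=1$ and $c=1$, but a perfect execution would need a total flush of $2S=2$, which would drive $c$ through $-1$; every feasible execution of that gadget therefore has $\Delta\Phi\ge 1$, so with a zero-effect edge to $q_F$ the configuration $q_F(0^{d+2})$ is in fact unreachable in the forward direction. Two one-line fixes work and leave your $\Phi$-analysis untouched: either move $c\addeq 1$ to the \emph{front} of each gadget (the flush then runs against $c=S^2+1$ and bottoms out at $(S-1)^2\ge 0$), or give the edge $q_{\textup{cu}}\to q_F$ effect $(-1,-1,0^d)$ instead of $0^{d+2}$ (this edge has $\Delta\Phi=2(S-1)\ge 0$, and on a run with $\Phi\equiv 0$ it forces $S=1$, hence $b=1$ and all $x_j=0$ just before it, so your backward argument survives unchanged).
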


Using Lemma~\ref{lem:pairs} and Theorem~\ref{thm:bounded-counters} one can pretty easily get some hardness results
for VASS reachability problems, namely Corollaries~\ref{cor:pspace}~and~\ref{cor:expspace}.
As an intermediate tool for these results we formulate the following lemma.

\begin{lemma}\label{lem:simulation}
For each $B$-bounded $d$-counter automaton $\A$ with $s$ states
one can construct in polynomial time a unary $(d+2)$-VASS $V_\A$
with two distinguished states $q_I, q_F$ such that
the following are equivalent:
\begin{enumerate}
  \item $\A$ has an accepting run
  \item there is a run from $q_I(\bar{B}, \bar{B}^2, 0^d)$ to $q_F(0^{d+2})$ in $V_\A$ where $\bar{B} = 2sd \cdot B^{d-1}$.
\end{enumerate}
\end{lemma}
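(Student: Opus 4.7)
The plan is to reduce Lemma~\ref{lem:simulation} to Lemma~\ref{lem:pairs} via a combinatorial preprocessing step that bounds how many zero-tests an accepting run of $\A$ needs to fire. Setting $B_0 := sd \cdot B^{d-1}$ so that $\bar{B} = 2B_0$, it suffices to show that whenever $\A$ has an accepting run it has one firing at most $B_0$ zero-tests; then applying Lemma~\ref{lem:pairs} with $B_0$ in place of $B$ (which is permissible because $B$-boundedness implies $B_0$-boundedness whenever $B \leq B_0$) produces a unary $(d+2)$-VASS whose initial configuration is $(2B_0, 4B_0^2, 0^d) = (\bar{B}, \bar{B}^2, 0^d)$, exactly as stated. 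The equivalence in the statement is then inherited verbatim from the biconditional in Lemma~\ref{lem:pairs}.

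The heart of the argument is the zero-test bound. Consider a shortest accepting run $\rho$ of $\A$ (shortest in number of transitions). For each fixed counter $i \in [1,d]$, look at the subsequence of configurations of $\rho$ appearing immediately before a zero-test of counter $i$. Each such configuration $q(v)$ satisfies $v[i]=0$ and $v[1] + \cdots + v[d] < B$ by $B$-boundedness, so it lies in a set of size at most $s \cdot \binom{B+d-2}{d-1} \leq s \cdot B^{d-1}$. A standard shortcut argument forces these configurations to be pairwise distinct along $\rho$: if two of them coincided, the segment of $\rho$ between them could be excised to yield a strictly shorter accepting run, contradicting minimality. Hence counter $i$ is zero-tested at most $s \cdot B^{d-1}$ times in $\rho$; summing over $i \in [1,d]$ gives at most $sd \cdot B^{d-1} = B_0$ zero-tests in total.

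The one place requiring care is the elementary estimate $\binom{B+d-2}{d-1} \leq B^{d-1}$: it is tight for $d=2$ and holds for all $B \geq d-1$, which is automatic in the hardness regimes of interest (otherwise one can adjust $\bar{B}$ by a harmless constant factor or observe that $\A$ then has only polynomially many configurations and can be decided directly). Given this bound, polynomial-time constructibility is immediate, since $\bar{B}$ and $\bar{B}^2$ are polynomial in $s$ and $B$ (for fixed $d$), and so the unary $(d+2)$-VASS produced by Lemma~\ref{lem:pairs} has polynomial size. No new zero-testing machinery is required: the whole content of the lemma is the crisp combinatorial bound on zero-tests combined with a direct appeal to Lemma~\ref{lem:pairs}.
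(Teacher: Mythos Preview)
Your proof is correct and follows essentially the same approach as the paper's: bound the number of zero-tests in a shortest (equivalently, non-repeating) accepting run by $sdB^{d-1}$ via a configuration-counting argument, then invoke Lemma~\ref{lem:pairs} with parameter $B_0 = \bar{B}/2$. As a side note, your caveat is unnecessary --- the inequality $\binom{B+d-2}{d-1} \leq B^{d-1}$ holds for all $B \geq 1$, since each factor $\frac{B+j}{(j+1)B}$ in its product expansion is at most $1$ --- and the paper sidesteps this entirely by simply bounding each of the remaining $d-1$ counters by $B$.
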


\begin{proof}
First observe that if there is an accepting run of the counter automaton $\A$ then there is also an accepting run with no repeating configuration.
Notice that the number of zero-tests in a run with no repeating configuration is bounded by the total number of $B$-bounded
configurations in $\A$ with at least one counter equal to zero. The number of such configurations with zero counter value can be bounded
by $s \cdot d \cdot B^{d-1}$. Indeed, there are at most $s$ choices of the state of the configuration, at most $d$ choices of the
counter, which equals to zero and at most $B^{d-1}$ choices for values of the other counters (some configurations are counted many times,
but this only strengthens the bound). Thus if there is an accepting run then there is an accepting run with at most $sdB^{d-1} = \bar{B} / 2$
zero-tests performed for $\bar{B}$ defined in the lemma statement. Notice now that if $\A$ is $B$-bounded then it is also $\bar{B}$-bounded
as $B \leq \bar{B}$. Then using Lemma~\ref{lem:pairs} applied to $\bar{B}$-bounded $d$-counter automaton $\A$ finishes the proof.
\end{proof}

The following corollaries are immediate consequences of Theorem~\ref{thm:bounded-counters} and Lemma~\ref{lem:simulation}.

\begin{corollary}\label{cor:pspace}
Given $n, s \in \N$ and a unary $4$-VASS $V$ with distinguished states $q_I, q_F$ it is $\pspace$-hard
to decide whether there is a run from $q_I(4s \cdot 2^n, 16s^2 \cdot 4^n, 0, 0)$ to $q_F(0^4)$.
\end{corollary}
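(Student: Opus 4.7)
The plan is to compose Theorem~\ref{thm:bounded-counters}(1) with Lemma~\ref{lem:simulation} via direct substitution of parameters. Starting from an arbitrary instance of the \pspace-hard reachability problem for $2^n$-bounded $2$-counter automata---namely a $2^n$-bounded $2$-counter automaton $\A$ with $s$ states together with the unary parameter $n$---I would apply Lemma~\ref{lem:simulation} using $d = 2$ and $B = 2^n$.

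Substituting these values into the formula $\bar{B} = 2sd \cdot B^{d-1}$ from the lemma yields $\bar{B} = 2 \cdot s \cdot 2 \cdot 2^n = 4s \cdot 2^n$, and hence $\bar{B}^2 = 16 s^2 \cdot 4^n$. Lemma~\ref{lem:simulation} then produces in polynomial time a unary $(d+2) = 4$-VASS $V_\A$ with distinguished states $q_I, q_F$ such that $\A$ has an accepting run if and only if there is a run in $V_\A$ from $q_I(4s \cdot 2^n, 16s^2 \cdot 4^n, 0, 0)$ to $q_F(0^4)$. This is exactly the reachability question appearing in the statement, so \pspace-hardness transfers directly.

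Since the construction reduces to plugging parameters into an already-proved lemma, I do not anticipate any substantive obstacle. The one place requiring a moment of care is checking that the overall reduction remains polynomial: the source instance has size polynomial in $|\A| + n + s$, while the target instance consists of the parameters $n$ and $s$ together with the polynomially-sized VASS $V_\A$, with the concrete initial-configuration values $4s \cdot 2^n$ and $16s^2 \cdot 4^n$ being specified symbolically by $n$ and $s$ rather than expanded inline. Thus the arithmetic in the statement matches exactly what the lemma delivers, and the corollary follows immediately.
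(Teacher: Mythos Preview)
Your proposal is correct and is exactly the argument the paper intends: the corollary is stated as an immediate consequence of Theorem~\ref{thm:bounded-counters}(1) and Lemma~\ref{lem:simulation}, and your substitution $d=2$, $B=2^n$ yielding $\bar B = 4s\cdot 2^n$ and $\bar B^2 = 16s^2\cdot 4^n$ is precisely the intended instantiation.
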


\begin{corollary}\label{cor:expspace}
Given $n, s \in \N$ and a unary $5$-VASS $V$ with distinguished states $q_I, q_F$ it is $\expspace$-hard
to decide whether there is a run from $q_I(6s \cdot 4^{2^n}, 36s^2 \cdot 16^{2^n}, 0^3)$ to $q_F(0^5)$.
\end{corollary}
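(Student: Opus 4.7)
The plan is to combine Theorem~\ref{thm:bounded-counters}(2) with Lemma~\ref{lem:simulation} by plugging in the right parameters. Theorem~\ref{thm:bounded-counters}(2) asserts \expspace-hardness of the reachability problem for $f$-bounded $3$-counter automata where $f(n) = 2^{2^n}$. I would take such an input: an $f(n)$-bounded $3$-counter automaton $\A$ (so $d = 3$ and $B = 2^{2^n}$) with $s$ states, together with the number $n$ in unary.

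Next I would feed $\A$ into Lemma~\ref{lem:simulation}. Since $d = 3$, the lemma produces in polynomial time a unary $(d+2) = 5$-VASS $V_\A$ with two distinguished states $q_I, q_F$ such that $\A$ has an accepting run iff $q_I(\bar{B}, \bar{B}^2, 0, 0, 0) \reaches q_F(0^5)$ in $V_\A$, where
\[
\bar{B} \;=\; 2 s d \cdot B^{d-1} \;=\; 6 s \cdot B^2 \;=\; 6 s \cdot (2^{2^n})^2 \;=\; 6 s \cdot 4^{2^n}.
\]
Consequently $\bar{B}^2 = 36 s^2 \cdot 16^{2^n}$, matching the initial configuration appearing in the statement.

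The reduction from Theorem~\ref{thm:bounded-counters}(2) to the reachability question in $V_\A$ is polynomial in the size of $\A$ and $n$: Lemma~\ref{lem:simulation} is explicitly polynomial-time, the VASS is unary, the numerical parameters $s$ and $n$ (with $n$ in unary) are passed through, and the only non-trivial values $6s \cdot 4^{2^n}$ and $36 s^2 \cdot 16^{2^n}$ appear only in the initial configuration description, which is part of the input to the target decision problem rather than of the VASS transitions. This yields \expspace-hardness of the promised reachability question.

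There is essentially no obstacle here; the only thing to be careful about is checking the arithmetic for $\bar{B}$ and $\bar{B}^2$ with $d = 3$ and $B = 2^{2^n}$, and noticing that Lemma~\ref{lem:simulation} requires $\A$ to be $B$-bounded (which is precisely the promise supplied by Theorem~\ref{thm:bounded-counters}(2)), so the promise on the counter automaton side is exactly what is needed to invoke the lemma.
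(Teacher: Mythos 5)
Your proof is correct and follows exactly the route the paper intends: the paper declares Corollary~\ref{cor:expspace} an immediate consequence of Theorem~\ref{thm:bounded-counters} and Lemma~\ref{lem:simulation}, and you instantiate Lemma~\ref{lem:simulation} with $d=3$, $B=2^{2^n}$ to get $\bar{B}=6s\cdot 4^{2^n}$, $\bar{B}^2=36s^2\cdot 16^{2^n}$, and a $5$-VASS, exactly as required. Your observation that the promise of $B$-boundedness from Theorem~\ref{thm:bounded-counters}(2) is precisely what Lemma~\ref{lem:simulation} needs, and that the doubly-exponential numbers appear only in the initial configuration (parameterised by $n,s$) rather than in the unary transitions, are the right points to check.
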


\section{Overview}\label{sec:overview}
Here we provide short sketches of the proofs of Theorems \ref{thm:np}, \ref{thm:pspace},
\ref{thm:expspace}~and~\ref{thm:tower}.
In the following sections we prove these theorems in detail.
Let us emphasise which techniques are used in which proofs.
Let us denote the controlling-counter technique by (CC), the multiplication triple technique by (MT)
and the quadratic pairs technique by (QP). Then to prove Theorem~\ref{thm:np} we use (CC),
to prove Theorem~\ref{thm:pspace} we use (CC) and (QP), to prove Theorem~\ref{thm:expspace} we use (MT) and (QP)
and to prove Theorem~\ref{thm:tower} we use (CC) and (MT).

\paragraph*{Proof of Theorem~\ref{thm:np}}
This is the easiest proof out of the four presented ones.
We reduce from the \textsc{Subset Sum} problem asking whether there is a subset of the set $\{s_1, \ldots, s_n\} \subseteq \N$
which sums up to a given number $s \in \N$. The main challenge is that numbers $s_i$ and $s$ in \textsc{Subset Sum} are
encoded in binary, while transitions in our $4$-VASS are encoded in unary. We use VASSes very similar to the one from
Example~\ref{ex:2VASS} in order to be able to obtain exponential counter values out of unary encoded numbers in VASS transitions.
If we add the third counter, which is a controlling-counter, we are able to construct a flat, unary $3$-VASS, which produces a number $s_i$
on a distinguished counter. Then we reduce the \textsc{Subset Sum} problem as follows: we have a distinguished counter called
the \emph{summing} counter, to which we first add value $s$ using a $3$-VASS (then altogether we have four counters).
Then for each $i \in [1,n]$ we construct a $3$-VASS, which produces number $s_i$ and then nondeterministically: either subtracts $s_i$
from the summing counter or does not touch the summing counter. After processing all the $3$-VASSes for $s_1, \ldots, s_n$
we check the summing counter to be zero: it is easy to observe that there exists a run reaching zero if and only if
the instance of \textsc{Subset Sum} is positive.

\paragraph*{Proof of Theorem~\ref{thm:pspace}}
By Corollary~\ref{cor:pspace} to show \pspace-hardness it is enough to design for given $s, n \in \N$ a $5$-VASS,
or in other words a five counter program of size polynomial in $s$ and $n$ which constructs on its first four counters $(x_1, x_2, x_3, x_4)$
values $(4s \cdot 2^n, 16s^2 \cdot 4^n, 0, 0)$ under the condition that $x_5 = 0$. Indeed, then checking whether
it reaches valuation $0^5$ at its end is \pspace-hard by Corollary~\ref{cor:pspace}.
We construct the pair $(4s \cdot 2^n, 16s^2 \cdot 4^n)$ on $(x_1, x_2)$ in the following way.
We start with $(x_1, x_2) = (4s, 16s^2)$ and then exactly $n$ times multiply $x_1$ by $2$ and $x_2$ by $4$.
The multiplications are realised as flushing $x_1$ or $x_2$ to $x_3$ and then flushing it back from $x_3$ to $x_1$ or $x_2$
simultaneously multiplying it by $2$ or $4$, respectively. We assume that multiplications are exact by forcing
appropriate counters $x_1$, $x_2$ and $x_3$ to be exactly zero after the flushes. This is realised by the use of controlling-counter
technique, the counter $x_5$ controls $x_1$, $x_2$ and $x_3$ thus if $x_5 = 0$ at the end of the run then all the multiplications were
indeed exact. Thus indeed after this phase the five counters have values $(4s \cdot 2^n, 16s^2 \cdot 4^n, 0, 0, x_5)$
under the condition that $x_5 = 0$.

\paragraph*{Proof of Theorem~\ref{thm:expspace}}
The idea is similar to the proof of Theorem~\ref{thm:pspace}.
By Corollary~\ref{cor:expspace} to show \expspace-hardness it is enough to design for given $s, n \in \N$ a $6$-VASS,
or in other words a six counter program of size polynomial in $s$ and $n$ which constructs on its first five counters $(x_1, x_2, x_3, x_4, x_5)$
values $(6s \cdot 4^{2^n}, 36s^2 \cdot 16^{2^n}, 0, 0, 0)$ under the condition that $x_6 = 0$. Indeed, then checking whether
it reaches valuation $0^6$ at its end is \expspace-hard by Corollary~\ref{cor:expspace}.
We construct the pair $(6s \cdot 4^{2^n}, 36s^2 \cdot 16^{2^n})$ on $(x_1, x_2)$ in the following way.
We start from setting $(x_1, x_2) = (6s, 36s^2)$ and then $2^n$ times we perform the following:
1) flush $x_1$ to $x_3$, 2) flush back $x_3$ to $x_1$ while multiplying by $4$, 3) flush $x_2$ to $x_3$,
4) flush back $x_3$ to $x_2$ while multiplying by $16$. After each flush be perform a zero-test to assure that the flush
was full. Additionally after these multiplications we perform a zero-test on $x_4$.
This time we cannot use the controlling-counter technique easily, as the number of zero-tests is equal to $4 \cdot 2^n + 1$,
which is super-linear. In order to simulate $4 \cdot 2^n + 1$ zero-tests (even on big counters) we use the multiplication triples technique.
We produce triple $(B, 8 \cdot 2^n + 2, B \cdot (8 \cdot 2^n+2))$ on counters $(x_4, x_5, x_6)$
for some big guessed value $B \in \N$ and use it to implement $4 \cdot 2^n + 1$ zero-tests
on $B$-bounded counters. Using this triple and checking that at the end of the run counter $x_6$ has value zero
guarantees that indeed all the flushes were full.
So after this phase we indeed have values $(6s \cdot 4^{2^n}, 36s^2 \cdot 16^{2^n}, 0, 0, 0)$ on the first five counters.

\paragraph*{Proof of Theorem~\ref{thm:tower}}
We reduce from the $\tower(n)$-bounded reachability problem for three-counter automata.
This construction uses both the multiplication triples technique and the controlling-counter technique
in an interplay. The aim is, similarly as in the proof of Theorem~\ref{thm:expspace},
to construct a triple of the form $(\tower(n), C, C \cdot \tower(n))$ for appropriately big $C$.
We first show that there exists a $7$-VASS, which is a $2^k$-amplifier (more precisely speaking an $f$-amplifier for $f(k) = 2^k$).
The notion of an amplifier was defined in~\cite{DBLP:journals/corr/abs-2104-13866}, we recall it in Section~\ref{sec:tower}.
Roughly speaking a $2^k$-amplifier from a triple $(B, C, BC)$ produces a triple $(2^B, C', C' \cdot 2^B)$
for some guessed value $C' \in \N$. In short words the construction of the amplifier works as follows: we start from a triple $(1, C', C')$
for guessed $C'$ and then using the triple $(B, C, BC)$ multiply exactly $B / 8$ times the first
and the third coordinate of the triple $(1, C', C')$ by exactly $2^8 = 256$.
After these multiplications we therefore get a triple $(2^{8 \cdot B/8}, C', C' \cdot 2^{8 \cdot B/8}) = (2^B, C', C' \cdot 2^B)$ as needed.
Using the trick from the previous paragraph we are able to achieve it by the use of just one additional counter and therefore the $2^k$-amplifier
has only seven counters. We use then the eighth counter as a controlling-counter: we compose the $2^k$-amplifier exactly $n$ times
and assure by the controlling-counter that the appropriate counters in the places of composition have value exactly zero,
which guarantees that composition works correctly.
As the number of compositions is linear this can be achieved by a single controlling-counter and thus the whole construction
uses only eight counters.  
\section{\np-hardness for 4-VASSes}\label{sec:np}
We reduce from the following problem:
\begin{quote}\label{qu:subset-sum}
\hskip -0.08cm
\textbf{\textsc{Subset Sum} problem}
\begin{description}
  \item[Input] Number $s_0 \in \N$, set of numbers $S = \{s_1, \ldots, s_n\} \subseteq \N$, all encoded in binary 
  \item[Question] Is there a subset of $S$ summing up exactly to $s_0$?
\end{description}
\end{quote}

For an instance of \textsc{Subset Sum} we 
design a four-counter program $P$ and show that there is a run of $P$ starting in $0^4$ and finishing in $0^4$ iff the instance is positive.
Our counter program has four counters: $x$ and $y$, which will be used to generate numbers $s_i$,
the summing counter $z$ and the controlling counter $c$.
The counter program $P$ consists of counter program $P_0$ and for each $i \in [1,n]$ counter programs $P_i$ and $P'_i$
in the following way:
\begin{algorithmic}[1]
\State $P_0$
\For {\, $i$ \, := \, $1$ \, \textbf{to } $n$}
\State $P_i \quad \textup{or} \quad P'_i$
\EndFor
\end{algorithmic}
The counter program $P_0$ will be constructed such that in every run reaching $0^4$ its effect on counter $z$ is exactly $s_0$.
On the other hand in such runs the effect of counter programs $P_i$ on $z$ for $i \in [1,n]$ will be exactly $-s_i$,
while $P'_i$ will have no effect on $z$.

Let us assume that $2^k$ is the smallest power of $2$ strictly bigger than all the numbers $s_0, s_1, \ldots, s_n$,
namely all $s_i$ can be encoded in $k$ bits.
For each $i \in [0,n]$ let $s_i = \langle b^i_{k-1} \cdots b^i_0 \rangle_2$ be the bit representation of $s_i$.
We show now how the counter program $P_0$ is constructed.
For simplicity we first do not contain the controlling counter $c$ in the program.
\begin{algorithmic}[1]
\State \add{x}{b^0_{k-1}}
\For {\, $j$ \, := \, $k-2$ \, \textbf{downto } $0$}
\Loop
\State \sub{x}{1} \quad \add{y}{1}
\EndLoop
\Loop
\State \add{x}{2} \quad \sub{y}{1}
\EndLoop
\State \add{x}{b^0_j}
\EndFor
\Loop
\State \sub{x}{1} \quad \add{z}{1}
\EndLoop
\end{algorithmic}
One can easily see that if loops in lines 3-4, 5-6 and 8-9 are fired maximal possible number of times
then the final values of $(x, y, z)$ are $(0, 0, s_0)$. Therefore to guarantee that $z = s_0$ after program $P_0$ it
is enough to assure that $x = 0$ each time line 4 is left, $y = 0$ each time line 6 is left
and $x = 0$ when line 9 is left. In order to achieve that we use the counter $c$.
However its behaviour depends as well on programs $P_i$ and $P'_i$, so we first present them,
also without the controlling counter. We first show counter program $P_i$ also without counter $c$.
\begin{algorithmic}[1]
\State \add{x}{b^i_{k-1}}
\For {\, $j$ \, := \, $k-2$ \, \textbf{downto } $0$}
\Loop
\State \sub{x}{1} \quad \add{y}{1}
\EndLoop
\Loop
\State \add{x}{2} \quad \sub{y}{1}
\EndLoop
\State \add{x}{b^i_j}
\EndFor
\Loop
\State \sub{x}{1} \quad \sub{z}{1}
\EndLoop
\end{algorithmic}
The only difference between $P_i$ for $i \geq 1$ and $P_0$ is that in $P_i$
in the loop in lines 8-9 the counter $z$ is decreased, while in $P_0$ it was increased.
One can easily observe that if $P_i$ for $i \geq 1$ starts with valuation $(x, y, z) = (0, 0, N)$ and
all the loops are iterated maximal number of times then it finishes with valuation $(x, y, z) = (0, 0, N-s_i)$.
Counter program $P'_i$ (also with counter $c$ ignored) is the same as $P_i$ with the only difference
that in line 9 counter $z$ is not decreased, but kept unchanged.
Intuitively the run in VASS choses to use $P_i$ if the number $s_i$ have to be taken into the sum and $P'_i$
if the number $s_i$ is not taken into the sum.
We can see now that counter programs $P_0$, $P_i$ and $P'_i$ have the promised properties under the condition
that counters $x$ and $y$ are zero in the appropriate places. In order to assure it we add the controlling counter $c$.
One can observe that $P_0$, $P_i$ and $P'_i$ differ only on the operation done to $z$ in line 9: in $P_0$ it is increase by $1$,
in $P'_i$ it is increased by $0$ and in $P_i$ it is increased by $-1$.
Therefore we write one parametrised program to represent all the three counter programs.
The presented counter program $\bar{P}(i, \textup{sign})$ satisifes $P_0 = \bar{P}(0, 1)$, $P_i = \bar{P}(i, -1)$
and $P'_i = \bar{P}(i, 0)$.
\begin{algorithmic}[1]
\State \add{x}{b^i_{k-1}} \quad \add{c}{b^i_{k-1} \cdot k(n-i+1)}
\For {\, $j$ \, := \, $k-2$ \, \textbf{downto } $0$}
\Loop
\State \sub{x}{1} \quad \add{y}{1} \quad \sub{c}{n+1-i}
\EndLoop
\Loop
\State \add{x}{2} \quad \sub{y}{1} \quad \add{c}{(k+1)(n-i)+(j+1)}
\EndLoop
\State \add{x}{b^i_j} \quad \add{c}{b^i_j \cdot (k(n-i)+(j+1))}
\EndFor
\Loop
\State \sub{x}{1} \quad \add{z}{\textup{sign}} \quad \sub{c}{k(n-i)+1}
\EndLoop
\end{algorithmic}
The only part in $\bar{P}(i, \textup{sign})$, which is nontrivial to understand are the effects of transitions on the controlling
counter $c$. Let us recall from Lemma~\ref{lem:zero-testing} that if counter $c$ controls counter $x$ then any increment
of \add{x}{a} should be matched by \add{c}{Na}, where $N$ is the number of zero-tests which are planned to be performed on
$x$ in the remaining part of the run. It is clear that there exist an appropriate changes of $c$,
which fulfil Lemma~\ref{lem:zero-testing} and they are not too big, so for an intuitive understanding of the program
one does not need the next paragraph. However in order to prove that we above counter program
indeed satisfies the needed conditions we need to meticulously inspect all the cases, which we do below.

In order to count the needed changes on $c$ we need to count how many times zero-tests are performed
on the controlled counters $x$ and $y$.
In each program $\bar{P}$ the counter $y$ is zero-tested $k-1$ times in the line 6, while counter $x$ is
zero-tested $k-1$ times in the line 6 and once in line 9, so altogether $k$ times.
Therefore in line 1 in program $\bar{P}(i, \textup{sign})$ counter $x$ is waiting for all the tests in programs $\bar{P}$
with first parameter being $i, i+1, \ldots, n$, altogether $n-i+1$ programs $\bar{P}$. Thus the number of zero-tests waiting
for $x$ is exactly $k(n-i+1)$ and each increment \add{x}{b_{k-1}^i} should be matched by
increment \add{c}{b^i_{k-1} \cdot k(n-i+1)}. Similarly in line 9 in program $\bar{P}(i, \textup{sign})$ counter $x$ is waiting for
$k(n-i)$ zero-tests in programs $\bar{P}$ with first parameter being $i+1, \ldots, n$ plus one last zero-test after
the loop in lines 8-9 in program $\bar{P}(i, \textup{sign})$. Thus \sub{x}{1} has to be matched with
\sub{c}{k(n-i) + 1}. A similar calculation shows that in line 7 increment \add{x}{b^i_j} has to be matched
with \add{c}{b^i_j \cdot (k(n-i)+(j+1))}. A bit more involved calculation is needed in case of lines 4 and 6, as there both counters
$x$ and $y$ are modified, so modification of $c$ has to reflect both changes.
In line 4 counter $x$ is waiting for $k(n-i)$ zero-tests in next programs, one zero-test in line 9
and $j+1$ zero-tests in line 4 in the further iterations of the for loops, so altogether for $k(n-i) + (j+2)$ zero-tests.
In the same line counter $y$ is waiting for $(k-1)(n-i) + (j+1)$ zero-tests, thus the total
change on $c$ is the $-k(n-i) - (j+2) + (k-1)(n-i) + (j+1) = - (n-i) - 1$. Similarly one can count that
in line 6 counter $x$ awaits for $k(n-i) + (j+1)$ zero-tests, while counter $y$ awaits for $(k-1)(n-i) + (j+1)$ zero-tests.
Therefore $c$ should be incremented by $2(k(n-i)+(j+1)) - (k-1)(n-i) - (j+1) = (k+1)(n-i) + (j+1)$.

One can easily observe that all the changes performed on $c$ are of polynomial size, therefore the reduction
from \textsc{Sumset Sum} is indeed performed in polynomial time. This finishes the proof of the Theorem~\ref{thm:np}.

\section{\pspace-hardness for 5-VASSes}\label{sec:pspace}
Due to Corollary~\ref{cor:pspace} in order to prove Theorem~\ref{thm:pspace} it is enough to show
the following lemma.

\begin{lemma}\label{lem:five-vass}
For each $s, n \in \N$ one can construct in polynomial time
a unary $5$-VASS of size polynomial in $s$ and $n$ with distinguished states $q_I, q_F$ such that
for each run from $q_I(0^5)$ to $q_F(x_1, x_2, x_3, x_4, 0)$
we have $(x_1, x_2, x_3, x_4) = (4s \cdot 2^n, 16s^2 \cdot 4^n, 0, 0)$.
\end{lemma}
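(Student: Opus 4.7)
The plan is to construct a five-counter program in which $x_5$ plays the role of the controlling counter from Lemma~\ref{lem:zero-testing}, $x_4$ is never touched (so it trivially remains $0$), and $x_1,x_2,x_3$ carry out an iterated doubling and quadrupling. First I initialise $x_1 := 4s$ and $x_2 := 16s^2$ by a chain of $4s + 16s^2$ unit increments (polynomial in $s$). Then I loop $n$ times through a block that doubles $x_1$ and quadruples $x_2$ using $x_3$ as scratch, consisting of four flushes: (i) $x_1 \to x_3$ via $\coresub{x_1}{1},\coreadd{x_3}{1}$; (ii) $x_3 \to x_1$ with doubling via $\coresub{x_3}{1},\coreadd{x_1}{2}$; (iii) $x_2 \to x_3$; and (iv) $x_3 \to x_2$ with quadrupling via $\coresub{x_3}{1},\coreadd{x_2}{4}$. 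Between the flushes I insert checkpoints and \emph{plan} zero-tests: on $x_1$ after (i), on $x_3$ after (ii) and (iv), and on $x_2$ after (iii), giving $4n$ planned zero-tests in total.

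To enforce these $4n$ zero-tests using only the single counter $x_5$, I apply Lemma~\ref{lem:zero-testing} with $d = 4$, checkpoint sequence $c_0,\ldots,c_{4n}$ as above, sets $S_1,S_2,S_3 \subseteq [0,4n]$ recording the zero-test positions for $x_1,x_2,x_3$, and $S_4 = \emptyset$. Every $N_{j,i}$ is then bounded by $4n$. For each elementary transition with effect $(\delta_1,\delta_2,\delta_3,0)$ on the first four counters and lying inside sub-run $\rho_j$, I set its effect on $x_5$ to $N_{j,1}\delta_1 + N_{j,2}\delta_2 + N_{j,3}\delta_3$. Because each $|\delta_\ell|$ is a small constant and each $N_{j,i} \leq 4n$, every transition adjusts $x_5$ by at most $O(n)$, so the constructed VASS has unary size polynomial in $s$ and $n$. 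Condition~(1) of Lemma~\ref{lem:zero-testing} holds because $\src = 0^5$; condition~(2) holds by construction; and the hypothesis that the run ends in $q_F(x_1,x_2,x_3,x_4,0)$ is exactly condition~(3), namely $\trg[5] = 0$.

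Therefore, in any run from $q_I(0^5)$ to $q_F(x_1,x_2,x_3,x_4,0)$ all $4n$ planned zero-tests must be honoured. A short invariant then gives the exact multiplications: since each flush ends with its source counter forced to $0$, every flush is complete, so each doubling of $x_1$ and each quadrupling of $x_2$ is exact, and after $n$ blocks $x_1 = 4s \cdot 2^n$ and $x_2 = 16s^2 \cdot 4^n$. The equalities $x_3 = 0$ follow from the last zero-test in the $n$-th block, and $x_4 = 0$ is automatic since no transition touches $x_4$. The main obstacle I expect is the bookkeeping needed to verify condition~(2) of Lemma~\ref{lem:zero-testing} uniformly across all sub-runs $\rho_j$: one has to walk through each class of transitions (the initialisation and each of the four flush types in each of the $n$ blocks) and confirm that the $N_{j,i}$ coefficients used on $x_5$ are consistent with the chosen ordering of checkpoints. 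This is tedious but entirely mechanical.
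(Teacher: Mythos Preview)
Your proposal is correct and follows essentially the same approach as the paper: both leave $x_4$ untouched, initialise $(x_1,x_2)=(4s,16s^2)$, iterate $n$ blocks of four flushes through the scratch counter $x_3$ to double $x_1$ and quadruple $x_2$, and enforce the $4n$ zero-tests via the controlling-counter $x_5$ using Lemma~\ref{lem:zero-testing}. The only difference is presentational: the paper actually carries out the ``tedious but mechanical'' computation you defer, giving explicit closed forms for the $x_5$-updates in each of the four flush types (namely $n{+}1{-}i$, $-2$, $n{-}i$, and $2n{-}2i{-}1$ in block $i$), whereas you argue abstractly that each update is $O(n)$ and hence the VASS is unary of polynomial size.
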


Indeed, let $V_1$ be the $5$-VASS from Lemma~\ref{lem:five-vass} with distinguished states $q^1_I, q^1_F$.
Our aim is to reduce the problem from Corollary~\ref{cor:pspace} to the reachability problem in unary $5$-VASSes.
Let $V_2$ be a $4$-VASS from Corollary~\ref{cor:pspace} with distinguished states $q^2_I, q^2_F$ for
which we want to check whether there is a run from $q^2_I(4s \cdot 2^n, 16s^2 \cdot 4^n, 0, 0)$ to $q^2_F(0^4)$.
Let $V'_2$ be $V_2$ extended with the fifth coordinate in such a way that all the transitions have zero on this fifth coordinate.
We construct now a unary $5$-VASS $V$ with distinguished states $q^1_I, q^2_F$ which is a disjoined union
of $V_1$ and $V'_2$ with additional transition from $q^1_F$ to $q^2_I$ labelled by $0^5$. It is then immediate
to see that the following are equivalent:
\begin{itemize}
  \item there is a run from $q^2_I(4s \cdot 2^n, 16s^2 \cdot 4^n, 0, 0)$ to $q^2_F(0^4)$ in $V_2$
  \item there is a run from $q^1_I(0^5)$ to $q^2_F(0^5)$ in $V$,
\end{itemize}
which finishes the proof of Theorem~\ref{thm:pspace}.
Thus the rest of this section focuses on the proof of Lemma~\ref{lem:five-vass}.

\begin{proof}[Proof of Lemma~\ref{lem:five-vass}]
In the proof we prefer to use the terminology of counter programs instead of VASSes, but recall
that counter programs are just syntactic sugar to present VASSes in a human-readable way.
In our construction we actually do not use the counter $x_4$.
Our aim is to construct on $(x_1, x_2)$ values $(4s \cdot 2^n, 16s^2 \cdot 4^n)$.
We start with setting $(x_1, x_2)$ on $(4s, 16s^2)$.
Then we need to multiply $n$ times $x_1$ by $2$ and $x_2$ by $4$.
We realise it by flushing $x_1$ to $x_3$ and then flushing it back from $x_3$ to $x_1$ while
simultaneously multiplying by $2$, and similarly with $x_2$ but multiplying it by $4$.
In order to assure that all the multiplications
are exact we perform a zero-test after each flush, then we are sure that all the flushes are full.
Before explaining how we realise zero-tests we can already present how our counter program
works. Let us define the following macro $\mult(x, y, c)$ for two counters $x$ and $y$ and number $c \in \N$.
\begin{algorithmic}[1]
\Loop \quad \sub{x}{1} \quad \add{y}{1}
\EndLoop
\State \ztest($x$)
\Loop \quad \add{x}{c} \quad \sub{y}{1}
\EndLoop
\State \ztest($y$)
\end{algorithmic}
Using the macro $\mult(x, y, c)$ we can briefly describe our counter program as follows.
\begin{algorithmic}[1]
\State \add{x_1}{4s} \quad \add{x_2}{16s^2}
\For {\, $i$ \, := \, $1$ \, \textbf{to } $n$}\label{l:for}
\State $\mult(x_1, x_3, 2)$
\State $\mult(x_2, x_3, 4)$
\EndFor
\end{algorithmic}
It is easy to see that after the above counter program indeed $(x_1, x_2, x_3)$ are equal to
$(4s \cdot 2^n, 16s^2 \cdot 4^n, 0)$ as supposed. Thus it remains to explain how do we realise zero-tests.
We use the controlling-counter technique described in Section~\ref{sec:prelim} and used also in Section~\ref{sec:np}
(and in Section~\ref{sec:tower} later). The counter $x_5$ is the controlling-counter in our counter program and it controls
counters $x_1$, $x_2$ and $x_3$. Recall that in this technique each operation on one of the controlled counters \add{x_i}{a}
is matched by an operation of the controlling-counter \add{x_5}{Na}, where $N$ is the number of zero-tests which will be performed
on the counter $x_i$ in the rest of the run after this operation.
By Lemma~\ref{lem:zero-testing} we know that if value of the controlling-counter $x_5$
is equal to zero at the end of the run then all the zero-tests on controlled counters were correct as well.
Recall also that a bit counterintuitively a zero-test on controlled counter is not reflected in the counter program by any code,
the only effect of a zero-test on some counter $x_i$ is that less zero-tests will be performed on $x_i$ in the future, thus
changes of $x_i$ are reflected now in the controlling-counter in a bit different way ($N$ decreases by one).
Thus the above presented counter program after implementing the zero-tests looks as follows.
\begin{algorithmic}[1]
\State \add{x_1}{4s} \quad \add{x_2}{16s^2}
\For {\, $i$ \, := \, $1$ \, \textbf{to } $n$}\label{l:for}
\Loop \quad \sub{x_1}{1} \quad \add{x_3}{1} \quad \add{x_5}{n+1-i}
\EndLoop
\Loop \quad \add{x_1}{2} \quad \sub{x_3}{1} \quad \sub{x_5}{2}
\EndLoop
\Loop \quad \sub{x_2}{1} \quad \add{x_3}{1} \quad \add{x_5}{n-i}
\EndLoop
\Loop \quad \add{x_2}{4} \quad \sub{x_3}{1} \quad \add{x_5}{2n-2i-1}
\EndLoop
\EndFor
\end{algorithmic}
Let us check carefully that the operations on the controlling-counter $x_5$ are correct.
In line 3 counter $x_1$ awaits for $n+1-i$ zero-tests and counter $x_3$ awaits for $2(n+1-i)$ zero-tests,
so counter $x_5$ should be increased by $(n+1-i) \cdot (-1) +  2(n+1-i) \cdot 1 = n+1-i$.
In line 4 counter $x_1$ awaits for $n-i$ zero-tests and counter $x_3$ awaits for $2(n+1-i)$ zero-tests,
so counter $x_5$ should be increased by $(n-i) \cdot 2 +  2(n+1-i) \cdot (-1) = -2$.
In line 5 counter $x_2$ awaits for $n+1-i$ zero-tests and counter $x_3$ awaits for $2(n+1-i)-1$ zero-tests,
so counter $x_5$ should be increased by $(n+1-i) \cdot (-1) +  (2(n+1-i)-1) \cdot 1 = n-i$.
In line 6 counter $x_1$ awaits for $n-i$ zero-tests and counter $x_3$ awaits for $2(n+1-i)-1$ zero-tests,
so counter $x_5$ should be increased by $(n-i) \cdot 4 +  (2(n+1-i)-1) \cdot (-1) = 2n-2i-1$.
So the above counter program satisfies the conditions of Lemma~\ref{lem:zero-testing}.
Thus by Lemma~\ref{lem:zero-testing} indeed if $x_5 = 0$ at the end of the counter program then
$(x_1, x_2, x_3, x_4)$ have values $(4s \cdot 2^n, 16s^2 \cdot 4^n, 0, 0)$ which finishes the proof.
\end{proof}
\section{\expspace-hardness for 6-VASSes}\label{sec:expspace}
Similarly as in Section~\ref{sec:pspace} using Corollary~\ref{cor:expspace} and the following Lemma~\ref{lem:six-vass}
one can easily derive Theorem~\ref{thm:expspace}. As the argument is totally analogous to the argument in the beginning of Section~\ref{sec:pspace}
and easy to see we do not repeat it here.

\begin{lemma}\label{lem:six-vass}
For each $s, n \in \N$ one can construct in polynomial time
a binary $6$-VASS with distinguished states $q_I, q_F$ such that
for each run from $q_I(0^6)$ to $q_F(x_1, x_2, x_3, x_4, x_5, 0)$
we have $(x_1, x_2, x_3, x_4, x_5) = (6s \cdot 4^{2^n}, 36s^2 \cdot 16^{2^n}, 0, 0, 0)$.
\end{lemma}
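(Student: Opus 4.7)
The plan is to follow the sketch from Section~\ref{sec:overview} and adapt the multiplication-triples technique recalled in Lemma~\ref{lem:triples}. We construct a six-counter program in which $x_1, x_2$ will hold the target values, $x_3$ serves as a temporary register for flushes, and $(x_4, x_5, x_6)$ plays the role of the triple $(b, c, d)$. Throughout the computation we maintain by design the invariant $x_1 + x_2 + x_3 + x_4 = B$ for a single guessed bound $B$, mirroring every modification of the working counters on $x_4$. Since the VASS is binary, single transitions of effect $6s$, $36s^2$, or $2C = 8 \cdot 2^n + 2$ fit in $O(n + \log s)$ bits, where we set $C = 4 \cdot 2^n + 1$.

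First we would build a setup phase: guess $B$ nondeterministically by iterating the paired addition \add{x_4}{1}, \add{x_6}{2C} some number of times, then set \add{x_5}{2C} in a single transition, and finally set \add{x_1}{6s}, \sub{x_4}{6s} and \add{x_2}{36s^2}, \sub{x_4}{36s^2}; at the end of setup $(x_4, x_5, x_6) = (B - 6s - 36s^2,\, 2C,\, 2BC)$ and $(x_1, x_2, x_3) = (6s, 36s^2, 0)$, with the invariant holding. Next we would build the iteration phase with $2^n$ intended rounds; each round performs the four flushes $x_1 \to x_3$, $x_3 \to x_1$ with per-step factor four, $x_2 \to x_3$, $x_3 \to x_2$ with per-step factor sixteen, and inserts between them four zero-tests on $x_1$, $x_3$, $x_2$, $x_3$ respectively, each implemented by Algorithm~\ref{alg:zero-test} applied to the counters $x_1, x_2, x_3$ with $x_4$ in the role of $b$. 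The multiplying flush-backs \sub{x_3}{1}, \add{x_1}{4} and \sub{x_3}{1}, \add{x_2}{16} are augmented by the compensating decrements \sub{x_4}{3} and \sub{x_4}{15}, respectively, to preserve the invariant. Finally we would add one more zero-test, this time on $x_4$ itself, using the symmetric form of Algorithm~\ref{alg:zero-test} in which any of the $m+1 = 4$ counters appearing in the maintained sum may take the role of the tested counter.

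Altogether the program will fire exactly $4 \cdot 2^n + 1 = C$ zero-tests, matching the budget carried by $x_5$. Correctness then reduces to the standard MT accounting: in any run reaching $q_F$ with $x_6 = 0$, the total decrement of $x_6$ equals $2BC$, so each of the $C$ zero-tests, whose decrement of $x_6$ is bounded by $2B$, must be maximal. Maximality of the $4 \cdot 2^n$ iteration zero-tests forces every flush to be full, giving $(x_1, x_2, x_3) = (6s \cdot 4^{2^n},\, 36s^2 \cdot 16^{2^n},\, 0)$; maximality of the final zero-test forces $x_4 = 0$ at the moment of firing, which via the invariant pins $B = 6s \cdot 4^{2^n} + 36s^2 \cdot 16^{2^n}$ exactly, so that $x_4 = 0$ also persists to $q_F$. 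The hard part, and the reason the result is not an immediate corollary of Lemma~\ref{lem:triples}, is precisely this last point: the MT technique as stated leaves $b = B$ at the end, so one cannot simply drain $x_4$ freely without breaking the uniqueness statement of the lemma (an adversary who guesses $B$ too large and underdrains would reach $q_F$ with $x_6 = 0$ but $x_4 > 0$). The workaround is the extra zero-test on $x_4$ itself: spending one unit of the zero-test budget on the would-be controlling counter pins the guess of $B$ down to exactly the final sum of the working counters, and thereby forces $x_4 = 0$ without any extra freedom.
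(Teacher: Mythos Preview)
Your proposal is correct and follows essentially the same approach as the paper: both set up the triple $(x_4,x_5,x_6)=(B,\,2C,\,2BC)$ with $C=4\cdot 2^n+1$, maintain the invariant $x_1+x_2+x_3+x_4=B$ through the $\mult$ phases, and use the single extra zero-test on $x_4$ at the end to force the guessed $B$ to equal exactly $6s\cdot 4^{2^n}+36s^2\cdot 16^{2^n}$. If anything, your account is more explicit than the paper's about the compensating decrements \sub{x_4}{3} and \sub{x_4}{15} in the multiplying flush-backs, which the paper leaves implicit in its appeal to the MT invariant.
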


The rest of this section focuses on the proof of Lemma~\ref{lem:six-vass}.

\begin{proof}[Proof of Lemma~\ref{lem:six-vass}]
The main idea is quite similar to the proof of Lemma~\ref{lem:five-vass},
namely to set at the beginning $x_1 = 6s$, $x_2 = 36s^2$ and then $2^n$ times multiply the counters $x_1$ and $x_2$ by values
$4$ and $16$, respectively. The multiplications are be realised by the use of the counter $x_3$,
namely we use the macro $\mult$ from the proof of Lemma~\ref{lem:five-vass} with the counter $x_3$ as the second argument.
After these multiplications we also zero-test counter $x_4$ once, the purpose of it will be explained later.
The main difference between the proofs of Lemmas~\ref{lem:five-vass}~and~\ref{lem:six-vass}
is the way how we implement zero-tests. In the proof of Lemma~\ref{lem:five-vass} we used the controlling-counter technique,
but here it is not sufficient and we are forced to use the multiplication triples techniques which is more powerful, but
uses more counters. We are ready to present the demanded counter program,
it roughly speaking looks as follows.
\begin{algorithmic}[1]
\State \add{x_1}{6s} \quad \add{x_2}{36s^2}
\For {\, $i$ \, := \, $1$ \, \textbf{to } $2^n$}\label{l:for}
\State $\mult(x_1, x_3, 4)$
\State $\mult(x_2, x_3, 16)$
\EndFor
\State \ztest($x_4$)
\end{algorithmic}
Here however we cannot expand the macro \textbf{\textup{for}} as it would result in a counter program of exponential size.
We need therefore to show how to implement zero-tests and how to iterate exactly $2^n$ the $\mult$ instructions.
Notice that we need to perform exactly $4 \cdot 2^n$ zero-tests inside the $\mult$ instructions
and then one zero-test on counter $x_4$.
If our counters are $B$-bounded for some $B$ then thanks to Lemma~\ref{lem:triples} in order to simulate
these $4 \cdot 2^n + 1$ zero-tests it is enough to have a triple $(B, 8 \cdot 2^n+2, (8 \cdot 2^n+2) \cdot B)$.
The best bound $B$ such that all $x_1 + x_2 + x_3 + x_4 \leq B$ through the whole run
is $B = 6s \cdot 4^{2^n} + 36s^2 \cdot 16^{2^n}$. At the first moment it looks a bit like a problem, as it seems that
we need one more time to produce triples with doubly-exponential entries.
We perform here however a twist in thinking about triples $(B, C, BC)$,
which was one of the main conceptual contributions of~\cite{DBLP:journals/corr/abs-2105-08551}.
Namely, the triple $(B, C, BC)$ with small $B$ and big $C$ can be both used to implement $C/2$ zero-tests on $B$-bounded counters
and to implement $B/2$ zero-tests on $C$-bounded counters.
In other words: we do not need to compute our big bound $B$, we just need to guess it nondeterministically.
If the guess will be too small then the corresponding run will not be accepting, but it is important that there exists
an appropriate guess for $B$.

Thus our counter program first prepares a triple $(B, C, BC)$ on the counters $x_4$, $x_5$ and $x_6$.
\begin{algorithmic}[1]
\State \add{x_5}{8 \cdot 2^n + 2}
\Loop
\State \add{x_4}{1} \quad \add{x_6}{8 \cdot 2^n + 2}
\EndLoop
\end{algorithmic}
Notice that $n$ is given in unary, as in the reachability problem for bounded three-counter automata $n$ is given in unary.
Thus the above fragment of counter program is of polynomial size, as $8 \cdot 2^n + 2$ can be represented in polynomially many bits,
recall that our $6$-VASS is binary. It is actually the only place where we use the fact that we deal with binary VASSes, but it is an important one.
After this program fragment valuation of $(x_4, x_5, x_6)$ equals $(B, 8 \cdot 2^n + 2, (8 \cdot 2^n + 2)\cdot B)$ for some $B \in \N$.
Then the zero-tests $\ztest(x_i)$ for $i \in \set{1, 2, 3}$ use this triple to perform at most $B$ zero-tests on those counters.

The last part, which remains to be shown is how to afford that the for-loop is fired exactly $2^n$ times.
Recall now that our triple $(B, 8 \cdot 2^n + 2, (8 \cdot 2^n + 2) \cdot B)$ guarantees that
in order to reach at the end of the program some value $(B', 0, 0)$ we need to fire exactly $4 \cdot 2^n + 1$ zero-tests,
which means that the loop needs to be repeated exactly $2^n$ times. Finally observe that after firing these zero-tests
we obtain counter values of $(x_4, x_5, x_6)$ of the form $(B', 0, 0)$, where $B' + 6s \cdot 4^{2^n} + 36 s^2 \cdot 16^{2^n} = B$.
However we actually need $x_4$ to be exactly zero at this point. In order to assure it we apply last zero-test on counter $x_4$.
This zero-test does not differ at all from zero-tests on counters $x_1$, $x_2$ and $x_3$ at all, notice that during the
run we actually keep the invariant $x_1 + x_2 + x_3 + x_4 = B$, so all the counters $x_i$ for $i \in [1,4]$ behave symmetrically
with respect to zero-testing. Notice now that the only guess for $B$ which allows for $x_4 = 0$ at this point
is $B = 6s \cdot 4^{2^n} + 36 s^2 \cdot 16^{2^n}$, in all the other cases the run of our counter program will not reach $x_6 = 0$.
Summarising, the counter program has the following code.
\begin{algorithmic}[1]
\State \add{x_5}{8 \cdot 2^n + 2}
\Loop
\State \add{x_4}{1} \quad \add{x_6}{8 \cdot 2^n + 2}
\EndLoop
\State \add{x_1}{6s} \quad \add{x_2}{36 s^2}
\Loop
\State $\mult(x_1, x_3, 4)$
\State $\mult(x_2, x_3, 16)$
\EndLoop
\State \ztest($x_4$)
\end{algorithmic}
Thus checking whether $x_6 = 0$ at the end of the program indeed assures that the other values are equal
$x_1 = 6s \cdot 4^{2^n}$, $x_2 = 36s^2 \cdot 16^{2^n}$ and $x_3 = x_4 = x_5 = 0$.
\end{proof}

\section{\tower-hardness for 8-VASSes}\label{sec:tower}
Similarly as in Section~\ref{sec:expspace}
due to Lemma~\ref{lem:triples} it is enough to prove the following lemma.

\begin{lemma}\label{lem:towertriple}
For each $n \in \N$ one can construct in polynomial time 
a unary $8$-VASS with distinguished states $q_I, q_F$ such that
for each run from $q_I(0^8)$ to $q_F(x_1, x_2, x_3, x_4, x_5, x_6, x_7, 0)$
we have $x_1 = \tower(n)$, $x_3 = x_2 \cdot \tower(n)$ and $x_4 = x_5 = x_6 = x_7 = 0$.
\end{lemma}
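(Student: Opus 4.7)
The plan is to follow the blueprint sketched in Section~\ref{sec:overview}: design a $2^k$-amplifier as a $7$-VASS gadget $A$, compose $n-1$ copies of $A$ starting from a tiny initial triple, and use the $8$th counter $x_8$ as a global controlling-counter (via Lemma~\ref{lem:zero-testing}) to enforce the clean handoff of counter values between stages. The final triple left on three designated counters at the end of the $(n-1)$-th amplification step will then be $(\tower(n), C_{n-1}, C_{n-1} \cdot \tower(n))$, matching the form demanded by the lemma statement.

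First I would build the amplifier $A$. Given an input triple $(B, C, BC)$ on three input counters, and a nondeterministically guessed $C'$ placed on a fourth counter, $A$ should produce an output triple $(2^B, C', C' \cdot 2^B)$ on three output counters and zero out the input counters plus one scratch counter. The implementation initialises the output triple to $(1, C', C')$ (where the first and third output counters start at $1$ via a single increment, with the second holding the guessed $C'$), and then repeats exactly $B/8$ times the ``multiply by $256$'' step, which multiplies simultaneously the first and third output counters by $2^8$. Each such step is realised as eight doublings of each of the two counters via flush-and-flush-back pairs through the scratch counter. The zero-tests that guarantee each flush is total are simulated from the input triple $(B, C, BC)$ by the generalised MT protocol of Algorithm~\ref{alg:zero-test}; the Lasota trick of zero-testing many counters from a single triple is exactly what keeps the total counter count at seven. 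Finally, the loop performs exactly $B/8$ iterations because each iteration is tied to a fixed decrement of the input first counter and a matching decrement (of $B$) of the input third counter, so reaching input counter values $(0, \cdot, 0)$ at the end forces the loop to have been fired $B/8$ times.

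Second, I would compose $n-1$ copies of $A$. Start by placing a tiny triple $(\tower(1), C_0, \tower(1) \cdot C_0) = (2, C_0, 2C_0)$ on three of the eight counters, with $C_0$ nondeterministically guessed. One application of $A$ turns $(B, C, BC)$ into $(2^B, C', C' \cdot 2^B)$, so after $n-1$ applications the first coordinate equals $\tower(n)$ as required. Because only seven counters are available for the amplifier pipeline, the input and output roles of counter triples must swap between consecutive stages: the three counters holding the input triple of stage $k$ are emptied by $A$ and then reused as output counters for stage $k+1$, while the scratch counter is shared across stages. To force all the auxiliary counters (the three input counters plus the scratch) to be exactly zero at each of the $n-1$ stage boundaries, I use $x_8$ as a controlling-counter as in Lemma~\ref{lem:zero-testing}; because the total number of zero-tests needed at the boundaries is $O(n)$, the weights on $x_8$ remain polynomial, so the resulting $8$-VASS is unary and of size polynomial in $n$.

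The main obstacle is the careful design of the amplifier and the precise bookkeeping for its MT-zero-tests: one must verify that all counters being zero-tested inside $A$ remain bounded by $B$ throughout (so that the input triple $(B, C, BC)$ suffices), and that the total number of zero-tests inside one amplifier application is at most $C/2$. Concretely, the counters being MT-zero-tested are the output first, output third and scratch counters, which during the $k$-th inner iteration hold values of order $256^{k-1}$; they are $B$-bounded iff $256^{B/8} \le B$ fails in the \emph{wrong} direction --- so the construction must arrange things so that the zero-tests occur on counters whose values are controlled by $B$ (for instance by flushing into the output counter rather than out of it, keeping the flush partner small), and the accounting of how many zero-tests are consumed must fit under the $C/2$ budget. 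Once this is set up correctly, the soundness direction (every accepting run produces the stated values) follows from Lemmas~\ref{lem:triples} and~\ref{lem:zero-testing}, and completeness follows from the guessed $C_0$ and $C'$ at each stage being chosen large enough.
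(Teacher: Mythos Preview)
Your high-level architecture matches the paper's: a $7$-counter amplifier, composed $n$ times (you say $n-1$ starting from $\tower(1)=2$, the paper does $n$ starting from $1$; cosmetically equivalent), with $x_8$ as a controlling-counter via Lemma~\ref{lem:zero-testing}. But you are stuck precisely at the crux of the construction and your proposed workaround does not work.

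You correctly observe that the counters to be zero-tested inside the amplifier (the ones being multiplied up to $2^B$) are \emph{not} $B$-bounded, so the triple $(B,C,BC)$ used in the standard way ($C/2$ zero-tests on $B$-bounded counters) cannot serve. Your suggested fix, ``arrange the flushes so the zero-tested counter stays small'', is impossible: to multiply a counter by $256$ with the $\mult$ macro you must flush it out entirely and certify it reached zero, so at some point you must zero-test a counter that carries a value of order $256^k$.

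The missing idea is what the paper (following~\cite{DBLP:journals/corr/abs-2105-08551}) calls the swap in the roles of $B$ and $C$: a triple $(B,C,BC)$ equally well supplies $B/2$ zero-tests on $C$-bounded counters. Since $C$ is \emph{guessed} at the start of each stage and may be as large as you like, the growing output counters (of size up to $C'(1+2^B)$) are $C$-bounded for any sufficiently large guess. Moreover, exhausting the triple now forces exactly $B/2$ zero-tests, and with four zero-tests per loop iteration this is exactly what pins the loop to run $B/8$ times and hence yields the factor $256^{B/8}=2^B$. So the number of zero-tests is controlled by $B$ (small), while the bound on the tested counters is controlled by $C$ (huge). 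Once you make this swap, the amplifier goes through with seven counters and the rest of your plan (the controlling-counter composition) is exactly as in the paper.
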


The proof that Lemma~\ref{lem:towertriple} implies Theorem~\ref{thm:tower}
is even simpler than the corresponding one for \expspace-hardness for $6$-VASSes
as we do not need to prove any bound on the number of needed zero-tests;
it follows immediately from Theorem~\ref{thm:bounded-reachability}
and Lemma~\ref{lem:triples}.

Before showing Lemma~\ref{lem:towertriple} we recall first the notion of amplifier and prove
a suitable lemma. Here we define an amplifier in a restrictive setting,
especially adjusted to our application. In particular instead of talking about $f$-amplifier
for $f(k) = 2^k$ we just talk about amplifiers, as we only apply here the notion of amplifiers
to this particular function $f$.

A $7$-VASS $V$ together with its two distinguished states $q_\inp$ and $q_\out$
is an \emph{amplifier} if the following holds:
\begin{itemize}
  \item if $q_\inp(B, C, BC, 0^4) \trans{} q_\out(0^4, B', C', D')$ in $V$ then \newline
  $B'  = 2^n$ and $D' = B' \cdot C'$; and
  \item for each $C' \in \N$ there exists $C \in \N$ such that \newline
  $q_\inp(B, C, BC, 0^4) \trans{} q_\out(0^4, 2^B, C', 2^B \cdot C')$ in $V$.
\end{itemize}

We first show the following lemma.

\begin{lemma}\label{lem:amplifier}
There exists an amplifier.
\end{lemma}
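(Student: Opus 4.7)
The plan is to build a seven-counter program in which $x_1, x_2, x_3$ hold the input triple $(B, C, BC)$, the counters $x_5, x_6, x_7$ build the output triple, and $x_4$ is a single shared helper for flush operations. The program has three blocks, as sketched in Section~\ref{sec:overview}. Block~1 nondeterministically guesses $C'$ and initialises $(x_5, x_6, x_7) \leftarrow (1, C', C')$ via a loop that simultaneously increments $x_6$ and $x_7$, keeping the invariant $x_7 = x_5 \cdot x_6$. Block~2 is an amplification loop, intended to iterate exactly $B/8$ times; each iteration decrements $x_1$ by $8$ and then applies $\mult(x_5, x_4, 256)$ followed by $\mult(x_7, x_4, 256)$, multiplying $x_5$ and $x_7$ by $256$ in turn. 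Because $\mult$ restores $x_4$ to $0$ between invocations, the single helper counter suffices for both multiplications and the seven-counter budget is respected. Block~3 demands $x_1 = x_2 = x_3 = 0$, acting as the final consistency check.

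The first step of the argument is the easy one: if all flushes inside every $\mult$ call are complete and the loop runs exactly $B/8$ times, then at the end $x_5 = 256^{B/8} = 2^B$ and $x_7 = 2^B \cdot C'$, so the output configuration $q_\out(0^4, 2^B, C', 2^B C')$ is reachable. This is what condition~(2) of the amplifier definition requires.

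The heart of the construction is enforcing the two exactness requirements in condition~(1), namely that the loop is iterated exactly $B/8$ times and that every flush in every $\mult$ call is full. I use the input triple $(B, C, BC)$ as a zero-test reservoir via the generalised multiplication-triples technique (Algorithm~\ref{alg:zero-test} and Lemma~\ref{lem:triples}). Exactly $4 \cdot (B/8) = B/2$ flush zero-tests are needed, so by choosing $C$ in the input triple large enough the $C/2$ budget of the triple suffices: this is precisely the existential over $C$ in condition~(2). Any incomplete flush fails to decrement $x_3$ by the full per-step amount, so $x_3$ ends positive and the final check $x_3 = 0$ in Block~3 rejects the run; symmetrically, if the loop terminates early then $x_1 > 0$ at the end and Block~3 again rejects. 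Hence only runs that perform exactly $B/8$ full iterations, each with exact multiplications, can reach $q_\out$, and these yield $B' = 2^B$ and $D' = B' C'$ as required.

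The main obstacle is that the counters $x_5$ and $x_7$ can grow to $2^B$, which is far larger than the $B$-bound required by the standard multiplication-triple zero-test. I handle this exactly as in the proof of Lemma~\ref{lem:six-vass}: rather than ever zero-testing $x_5$ or $x_7$ against the bound $B$, the flushes inside $\mult$ are arranged so that the quantities actually accumulated against $x_2, x_3$ are the per-iteration \emph{flows} through $x_4$ — each bounded by a quantity absorbable by the triple — while any shortfall propagates into residual value of $x_3$ and is caught by the final zero-check. Carrying this accounting through for the coefficient scheme of Algorithm~\ref{alg:zero-test}, so that the single helper $x_4$ serves both $\mult$ calls per iteration without conflicting bookkeeping, is where the construction is delicate; but once the coefficients are set correctly, both parts of the amplifier definition follow directly.
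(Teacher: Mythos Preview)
Your overall architecture matches the paper's: initialise $(x_5,x_6,x_7)=(1,C',C')$, iterate $\mult(x_5,x_4,256)$ and $\mult(x_7,x_4,256)$ using the single helper $x_4$, and drive the zero-tests with the input triple on $(x_1,x_2,x_3)$. But the way you propose to use the triple has a real gap.

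You treat the triple in its standard orientation, with budget ``$C/2$ zero-tests on $B$-bounded counters'', and then try to finesse the fact that $x_5,x_7$ (and the flushes through $x_4$) grow to $2^B$. Your proposed fix --- account only ``per-iteration flows through $x_4$'' against $x_2,x_3$ --- does not work: the amount flushed through $x_4$ in one call of $\mult(x_5,x_4,256)$ is exactly the current value of $x_5$, which reaches $2^B$, so these flows are precisely the unbounded quantities you are trying to avoid. The analogy with Lemma~\ref{lem:six-vass} is also off: there the bound $B$ in the triple is \emph{guessed} as large as needed, whereas here $B$ is the fixed input you are exponentiating, so it cannot absorb values of order $2^B$.

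The paper's resolution is to use the triple $(B,C,BC)$ with the roles of $B$ and $C$ \emph{swapped}: it supplies $B/2$ zero-tests on \emph{$C$-bounded} counters (this is the same ``twist'' already used in Section~\ref{sec:expspace}). Concretely, $x_1$ is decremented by $2$ per zero-test, $x_2$ plays the complement role keeping $x_2+x_4+x_5+x_6+x_7=C$, and $x_3$ is the product counter. Since condition~(2) of the amplifier lets us choose $C$ (for every $C'$ there \emph{exists} $C$), we may take $C\geq C'(1+2^B)$, so $x_4,x_5,x_6,x_7$ are genuinely $C$-bounded. Then reaching $x_1=x_3=0$ forces exactly $B/2$ zero-tests, hence exactly $B/8$ loop iterations with all flushes full --- no separate ``$x_1\mathrel{-}=8$'' counter is needed, and no delicate coefficient bookkeeping on $x_4$ is required. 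Once you make this swap the rest of your argument goes through verbatim.
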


\begin{proof}
We denote counters of the constructed $7$-VASS as $x_i$, for $i \in [1,7]$.
The main idea of the amplifier is that we initialise $(x_5, x_6, x_7)$ as $(1, C', C')$
for some guessed $C' \in \N$ and then multiply $B/8$ times both $x_5$ and $x_7$ by $2^8 = 256$.
The multiplication uses counter $x_4$, similarly as in the proof of Lemma~\ref{lem:six-vass}.
Namely we use the macro $\mult(x, y, c)$ for two counters $x$ and $y$ and a number $c \in \N$ defined as follows.
\begin{algorithmic}[1]
\Loop \quad \sub{x}{1} \quad \add{y}{1}
\EndLoop
\State \ztest($x$)
\Loop \quad \add{x}{c} \quad \sub{y}{1}
\EndLoop
\State \ztest($y$)
\end{algorithmic}
The zero-tests for $x_i$, where $i \in [4,7]$ will use the triple $(B, C, BC)$ on counters $(x_1, x_2, x_3)$
in order to be implemented.
The code of the amplifier is the following.
\begin{algorithmic}[1]
\State \add{x_5}{1}
\Loop \quad \add{x_6}{1} \quad \add{x_7}{1}
\EndLoop
\Loop
\State $\mult(x_5, x_4, 256)$
\State $\mult(x_7, x_4, 256)$
\EndLoop
\Loop \quad \sub{x_2}{1}
\EndLoop
\end{algorithmic}
In each iteration of the loop we fire exactly four zero-tests, twice on counter $x_4$, once on counter $x_5$ and once on counter $x_7$.
Our aim is to have exactly $B / 8$ iterations of the loop.
By Lemma~\ref{lem:triples} using the triple $(B, C, BC)$ guarantees that we can perform exactly $B/2$ zero-tests on $C$-bounded counters.
Notice that here, similarly as in Section~\ref{sec:expspace} we use triples $(B, C, BC)$ in an unusual way: to apply small number ($B / 2$)
of zero-tests on big ($C$-bounded) counters.
As we demand that after finishing the main loop (in lines 2-4) $x_1 = x_3 = 0$ we know that exactly $B/2$ zero-tests were performed,
this implies that exactly $B / 8$ iterations of the main loop were performed, as in each one there are four zero-tests.
Thus each run of our program, which reaches a configuration $(0^4, B', C', D')$ fulfils that $B' = 256^{B/8} = 2^B$ and $D' = B' \cdot C'$,
which means that the program satisfies the first condition of being an amplifier. To show that the second condition of being an amplifier also holds
observe that for each $C'$ is suffices to have $C \geq C' (1 + 2^B)$ and such a $C \in \N$ surely always exists, which finishes the proof
of Lemma~\ref{lem:amplifier}.
\end{proof}

We are now ready to prove Lemma~\ref{lem:towertriple}.

\begin{proof}[Proof of Lemma~\ref{lem:towertriple}]
In the proof for each $n \in \N$ we construct an eight counter program $P_n$ representing the VASS
demanded in the statement of Lemma~\ref{lem:towertriple}.
Very roughly speaking $P_n$ just uses $n$ times the amplifier from Lemma~\ref{lem:amplifier}.
Let $\ampl_i$ denote the code of amplifier from Lemma~\ref{lem:amplifier} with additional
counter $x_8$ with updates depending in the parameter $i$ (to be specified later).
The code of the counter program $P_n$ is roughly speaking the following.
\begin{algorithmic}[1]
\State \add{x_1}{1}
\Loop
\State \add{x_2}{1} \quad \add{x_3}{1}
\EndLoop
\For {\, $i$ \, := \, $1$ \, \textbf{to } $n$}\label{l:for}
\State $\ampl_i$
\State $\ztest(x_1, x_2, x_3, x_4)$
\Loop \quad \sub{x_5}{1} \quad \add{x_1}{1} \EndLoop
\Loop \quad \sub{x_6}{1} \quad \add{x_2}{1} \EndLoop
\Loop \quad \sub{x_7}{1} \quad \add{x_3}{1} \EndLoop
\State $\ztest(x_5, x_6, x_7)$
\EndFor
\end{algorithmic}
Recall here that the $\textbf{for}$-operator is a macro here, so in fact the program $P_n$
has the lines 5-10 repeated $n$ times with different values of $i$ in $\ampl_i$.
Notice also that the eight counter $x_8$ seemingly does not occur in the program.
It is used for implementing the zero-tests, we explain its role in a moment.
Observe however first that if the $\ztest$ procedures correctly zero-test the listed counters then
the program $P_n$ performs what it is supposed to perform.
After lines 1-3 values of seven first counters are $(1, C_0, C_0, 0^4)$ for some arbitrarily guessed $C_0 \in \N$.
It is easy to show by induction on $i$ that after $i$ iterations of the for-loop the counters have values
$(\tower(i), C_i, C_i \cdot \tower(i), 0^4)$ for some arbitrarily guessed $C_i \in \N$.
Indeed, if after the $i-1$ iterations values where $(\tower(i-1), C_{i-1}, C_{i-1} \cdot \tower(i-1))$
then after the amplifier in line 5 of the $i$-th iteration and zero-testing counters $x_i$ for $i \in [1,4]$
by definition of the amplifier counter values are $(0^4, \tower(i), C_i, C_i \cdot \tower(i))$ for some arbitrarily guessed $C_i \in \N$.
Lines 7-10 transfer the triple $(\tower(i), C_i, C_i \cdot \tower(i))$ from counters $(x_5, x_6, x_7)$ to counters $(x_1, x_2, x_3)$
and thus the proof of the induction step is finished.

It remains to show how the zero-tests are implemented. We use here the controlling-counter technique summarised
in Lemma~\ref{lem:zero-testing}. The controlling-counter technique is more useful here than the multiplication triples technique,
because to implement multiplication triple technique we need three additional counters, while to implement a linear number of zero-tests
we need just one additional controlling-counter.
Recall that in the technique of controlling-counter we have an additional controlling-counter (in our case $x_8$) which
starts from zero in the initial configuration. For each other counter (in our case $x_i$ for $i \in [1,7]$) which it controls
each modification of this counter of the form \add{x_i}{a} is matched by a modification of the controlling-counter \add{x_8}{Na},
where $N$ is the number of zero-tests which will be applied to the counter $x_i$ after this modification.
Notice that the commands $\ztest$ in the program $P_n$ are actually not transformed into any real code in this technique,
they only mark a point in the code where the controlling-counter $x_8$ slightly changes its behaviour.
With such a modifications of $x_8$ we are guarantied by Lemma~\ref{lem:zero-testing} that in each run in which
the controlling-counter finishes with value zero all the controlled
counters in all the zero-tested places indeed have value zero. Thus it is enough to add the suitable modifications of the counter $x_8$.
Each of the counters $x_i$ for $i \in [1,7]$ is zero-tested $n$ times in $P_n$, counters $x_1, x_2, x_3, x_4$ in line 6
while counters $x_5, x_6, x_7$ in line 10. Thus we need to add in line 1 operation \add{x_8}{n}
and in line 3 operation \add{x_8}{2n}.
In the $i$-th iteration of the for-loop in line 7 counter $x_5$ awaits for $n-i+1$ zero-tests
while counter $x_1$ awaits for $n-i$ zero-tests. This means that we need to modify counter $x_8$ by $(-1) \cdot (n-i+1) + 1 \cdot (n-i) = -1$.
Similarly in lines 8 and 9 we also need to add the operation \sub{x_8}{1}. Similarly in the program $\ampl_i$ we need
to add for each operation \add{x_i}{a} where $i \in [1,7]$ an operation \add{x_8}{(n-i+1) \cdot a} as each such counter awaits for $(n-i+1)$
zero-tests ($i-1$ of the zero-tests where already performed in the previous $i-1$ iterations of the for-loop). By Lemma~\ref{lem:zero-testing}
we are guarantied that the $\ztest$ operations are correct, thus indeed at the end of $P_n$ we finish the counter
valuation $(0^4, \tower(i), C_i, C_i \cdot \tower(i), 0)$, which finishes the proof of Lemma~\ref{lem:towertriple}.
For clarity we add the code of the counter program $P_n$ below.
\begin{algorithmic}[1]
\State \add{x_1}{1} \quad \add{x_8}{n}
\Loop
\State \add{x_2}{1} \quad \add{x_3}{1} \quad \add{x_8}{2n}
\EndLoop
\For {\, $i$ \, := \, $1$ \, \textbf{to } $n$}\label{l:for}
\State \add{x_5}{1} \quad \add{x_8}{(n-i+1)}
\Loop
\State \add{x_6}{1} \quad \add{x_7}{1} \quad \add{x_8}{(n-i+1)}
\EndLoop
\Loop
\State $\mult(x_5, x_4, 256)$
\State $\mult(x_7, x_4, 256)$
\EndLoop
\Loop \quad \sub{x_2}{1} \quad \sub{x_8}{(n-i+1)} \EndLoop
\Loop \quad \sub{x_5}{1} \quad \add{x_1}{1} \quad \sub{x_8}{1} \EndLoop
\Loop \quad \sub{x_6}{1} \quad \add{x_2}{1} \quad \sub{x_8}{1} \EndLoop
\Loop \quad \sub{x_7}{1} \quad \add{x_3}{1} \quad \sub{x_8}{1} \EndLoop
\EndFor
\end{algorithmic}
where inside the $\mult$ operation in the $i$-th iteration of the for-loop also the operations \add{x_i}{a} for $i \in [1,7]$
are enriched with operations \add{x_8}{(n-i+1) \cdot a}.
\end{proof}

\section{Future research}\label{sec:future}

\paragraph{General remarks}
An obvious future goal is to try to get tight complexity bounds for the reachability problem for fixed
dimensional VASSes. For unary flat VASSes \np-hardness is still open in dimension three
(in dimension two the reachability problem is \nl-complete for unary VASSes~\cite{DBLP:conf/lics/EnglertLT16}).
In general the complexity of the reachability problem in low dimensional VASSes still has a lot of question marks.
For each $d \in [3,7]$ we do not know whether it is elementary or not, moreover for $d \in [3,5]$ for binary encoding
we still cannot exclude that the problem is \pspace-complete, exactly like for $2$-VASSes~\cite{DBLP:conf/lics/BlondinFGHM15}.
In order to exclude \pspace-completeness it would be helpful to come up with some say \expspace-hard or \exptime-hard problem,
which does not involve bounded counter automata but is anyway convenient for a hardness proof;
similarly as \textsc{Subset Sum} is convenient for \np-hardness proof for unary flat $4$-VASSes.

One reason why proving hardness results in low dimensional VASSes is so hard may be because of the use of multiplication
triple technique: we need there three counters to lift our constructions one level higher. Some partial solution to that problem
is the technique of quadratic pairs proposed by us in the paper. It would be interesting to pursue the research in that direction
and try to design some other ways of efficient zero-testing.

\paragraph{Short paths}
A common technique to prove upper complexity bounds on the reachability problem in VASSes
is to show that if there is any reachability path then there is also a short one.
In this way the reachability problem was shown to be in \pspace for binary $2$-VASSes~\cite{DBLP:conf/lics/BlondinFGHM15} 
(reachability path implies exponential length reachability path) and in \nl for unary $2$-VASSes~\cite{DBLP:conf/lics/EnglertLT16}
(reachability path implies polynomial length reachability path).
In particular in order to have hope for \expspace-hardness for $d$-VASSes we need to have an example of $d$-VASS
with the shortest path of at least doubly-exponential length. Similarly for \tower-hardness we need an example of a VASS
with the shortest path being of tower length.
Currently there is a known example of $4$-VASS with shortest path being doubly-exponential~\cite{DBLP:conf/concur/Czerwinski0LLM20}
(Section 5) which means that we may have hope for decreasing the \expspace-hardness from dimension $6$ to $4$ if we happen
to find appropriate techniques. However there are no known examples of $3$-VASSes of shortest reachability path bigger
then exponential and of $7$-VASSes of shortest reachability path bigger then doubly-exponential. Therefore without finding
examples of $3$-VASSes and $7$-VASSes with longer shortest reachability paths we have no hope to prove
\expspace-hardness for $3$-VASSes or \tower-hardness for $7$-VASSes.
This indicates that a search for hard VASS examples may be actually the most needed and potentially fruitful one.

\paragraph{Two-counter automata}
Another way how we can sometimes decrease VASS dimension by one is to use
bounded two-counter automata instead of bounded three-counter automata.
We managed to achieve it for $2^k$-bounded automata in Theorem~\ref{thm:bounded-counters}
and it allowed us to prove Theorem~\ref{thm:pspace} in dimension $5$ instead of $6$.
However this technique does not seem to extend immediately to higher bounds, for example to $2^{2^k}$-bounded automata.
To our best knowledge the following statement is open, but we conjecture it to be true.
\begin{conjecture}\label{conj:twocounters}
The reachability problem for $f$-bounded two-counter automata (with unary updates) is \expspace-complete for $f(k) = 2^{2^k}$.
\end{conjecture}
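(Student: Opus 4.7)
The plan is to establish both the upper and lower bounds separately.

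For the upper bound, observe that an $f$-bounded two-counter automaton with $s$ states and $f(k) = 2^{2^k}$ has configuration space of size at most $s \cdot (2^{2^k}+1)^2 \leq s \cdot 2^{2^{k+1}}$; reachability in its configuration graph is therefore decidable in nondeterministic space $O(2^{k} + \log s)$, which is \expspace.

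For the lower bound, I would reduce from the \expspace-hard reachability problem for $2^{2^k}$-bounded three-counter automata provided by Theorem~\ref{thm:bounded-counters}. The key idea is a positional encoding: pack the three counter values $(a,b,c)$, each at most $B = 2^{2^k}$, into a single counter $x := a + bD + cD^2$ with $D := B+1$, leaving the second counter $y$ entirely free as scratch. The new counter $x$ is bounded by $D^3 \leq 2^{2^{k+2}}$, still within the $2^{2^{O(k)}}$ regime that defines \expspace. Simulating an increment or decrement of $a$, $b$, or $c$ becomes an adjustment of $x$ by $\pm 1$, $\pm D$, or $\pm D^2$, implementable as a short unary loop whose length is tracked in the finite state.

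The low-digit zero-test (checking $a = 0$, equivalently $D \mid x$) can be implemented by a nondeterministic loop that transfers $x$ to $y$ in chunks of $D$, followed by a zero-test on $x$ and a reverse transfer that restores $x$ to its original value. The main obstacle is the zero-tests on the higher digits, especially the top digit $c$ (equivalent to $x < D^2$): these correspond to inequality-style conditions rather than divisibility, and with only one scratch counter $y$ we lack the slack that was available with three counters. The plan is to compose the low-digit zero-test with digit-shift gadgets that bring the tested digit into the low-digit position, using $y$ transiently and restoring all state afterwards. The hard part will be proving soundness of these gadgets against adversarial nondeterminism---ruling out spurious accepting executions that silently drop information into $y$---while also verifying that the aggregate counter sum along the simulated run never exceeds the $2^{2^{O(k)}}$ bound required for the target problem. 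This soundness argument is presumably the reason the statement is currently phrased as a conjecture rather than a theorem.
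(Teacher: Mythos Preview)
The statement you are attempting to prove is explicitly listed in the paper as an \emph{open conjecture}; the paper offers no proof, so there is nothing to compare your argument against. That said, your proposal contains a genuine gap that is worth naming.

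Your upper bound is fine. The problem is the lower bound. You write that an increment of the middle or top digit ``becomes an adjustment of $x$ by $\pm D$ or $\pm D^2$, implementable as a short unary loop whose length is tracked in the finite state.'' But $D = B+1 = 2^{2^k}+1$ is doubly exponential in the input, while the two-counter automaton you are constructing must be produced in polynomial time and hence has only polynomially many states. A loop of length $D$ cannot be tracked in the finite control, and with unary updates there is no other way to add or subtract exactly $D$ in one step. The same objection applies to your zero-test gadget, which transfers $x$ to $y$ ``in chunks of $D$'': counting out a chunk of size $D$ again requires $D$ states. So the positional base-$D$ encoding collapses at the most basic level, well before the soundness issues you flag at the end.

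This is not a minor technicality: it is exactly the obstacle that separates the two-counter case from the three-counter case. In the paper's proof of Theorem~\ref{thm:bounded-counters}(1), the prime-based encoding works because the primes $p_1,\ldots,p_n$ are polynomially bounded and can therefore live in the finite control; that is what makes multiplication and division by $p_i$ implementable with one scratch counter. For the doubly-exponential bound no comparably small modulus is available, and the paper's proof of part~(2) sidesteps the issue by using a third counter. Finding a two-counter substitute is precisely the content of the conjecture.
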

This conjecture would not immediately give \expspace-hardness for binary $5$-VASSes as
generating the pair $(6s \cdot 4^{2^n}, 36s^2 \cdot 16^{2^n})$ in the proof of Theorem~\ref{thm:expspace}
currently needs six counters, but would be some step towards it and an interesting result in itself.

\paragraph*{Acknowledgements}
We thank Sławomir Lasota for letting us to present his proof of the first item in Theorem~\ref{thm:bounded-counters}.

\bibliographystyle{plain}
\bibliography{citat}

\appendix
\section{Missing proof}
We recall the statement of Theorem~\ref{thm:bounded-counters}.

\vskip 0.3cm

\noindent
\textbf{Theorem~\ref{thm:bounded-counters}.} The reachability problem for $f$-bounded $d$-counter automata is
\begin{enumerate}
  \item \pspace-hard for $f(n) = 2^n$ and $d = 2$
  \item \expspace-hard for $f(n) = 2^{2^n}$ and $d = 3$.
\end{enumerate}

\vskip 0.3cm

\begin{proof}[Proof of Theorem~\ref{thm:bounded-counters}]
The proof of (1) is due to~\cite{SlawekPspace}, while the proof of (2) is a small modification of the classical
proof from~\cite{FischerMR68}.

For the (sketch of the) proof of (1) we reduce from the reachability problem for linear bounded automata,
which is a classical problem known to be \pspace-hard~\cite{DBLP:books/aw/HopcroftU79}.
We can formulate the problem as follows: we are given a Turing machine (TM) with $B$ tape letters: $0, 1, \ldots, B-1$ and tape is of size $n$.
The problem is to decide whether there is a run of this TM starting in a distinguished state $q_I$ with whole tape covered with symbol $0$
and finishing in another distinguished state $q_F$ with also the whole tape covered with symbol $0$ such that this run only uses these
$n$ cells of the tape. The idea is to encode the configuration of the tape by one counter. Concretely speaking the configuration of
the tape with letter $a_i \in [0,B-1]$ on cell $i \in [1,n]$ is encoded as $p_1^{a_1} \cdot p_2^{a_2} \cdot \ldots \cdot p_n^{a_n}$,
where $p_1, \ldots, p_n$ are the $n$ smallest prime numbers. The Prime Number Theorem (PNT) (see for example~\cite{PNT}) says
that $\Pi(n) \approx n / \log(n)$, where $\Pi(n)$ is the number of primes in the interval $[1,n]$. 
More concretely speaking PNT can be formulated as $\lim_{n \to \infty} \frac{\pi(n)}{n / \log(n)} = 1$.
An easy consequence is that there exists a universal constant $C \in \N$ such that for each $n \in \N$ the $n$-th prime number $p_n$
satisfies $p_n \leq C \cdot n\log(n)$.
Thus we have that
\[
p_1^{a_1} \cdot p_2^{a_2} \cdot \ldots \cdot p_n^{a_n} \leq (C \cdot n\log(n))^{n(B-1)},
\]
which can be bounded from above by $2^{n^2}$ for arbitrarily big $n$, so it is bounded by $C' \cdot 2^{n^2}$ for some $C' \in \N$.
We show now that for an input TM one can construct a two-counter automaton $\A$ and a number $n' = C' \cdot n^2$
such that $\A$ is $2^{n'}$-bounded iff the input TM has an accepting run.
The state of TM and position of its head is kept in the state of $A$. We only need now to show how to simulate transitions of TM by a two-counter
automaton. Assume that TM has its head over the $i$-th cell and it fires a transition if a letter $a \in [0,B-1]$ is written in the $i$-th cell
writing there instead a letter $b \in [0,B-1]$. Then the corresponding automaton $\A$ needs to divide the number representing tape of TM
by $p_i^a$, check that it is no more divisible by $p_i$ (thus checking that indeed latter $a$ was in $i$-th cell) and then multiply it by $p_i^b$.
All these operations can be easily performed on a counter in two-counter automaton; the second counter is used as an auxiliary counter
for multiplications and divisions via $p_i$. Of course during this transition also the state of $\A$ needs to be modified in an appropriate way.
Thus reachability in the TM can be indeed reduced to the existence of the run in $\A$. Notice now that the starting value of the counter
representing tape is equal to $p_1^0 \cdot \ldots \cdot p_n^0 = 1$ and as well the final value.
So we have reduced existence of the accepting run in TM
to the existence of the run from $p_I(1,0)$ to $p_F(1,0)$ where $p_I, p_F$ are the initial and final states of $\A$, respectively.
The bounded counter automata start from $0^d$ valuation and finish in $0^d$ as well, so to finish the argument we just need to add auxiliary
initial state $p'_I$ with a transition to $p_I$ with the effect $(1,0)$ and auxiliary final state $p'_F$ with a transition from $p_F$
with the effect $(-1,0)$. Then existence of an accepting run in the input TM reduces to existence of the $2^{n'}$-bounded run from $p'_I(0,0)$
to $p'_F(0,0)$ in $\A$, which finishes the proof.

We prove (2) following the classical lines of the proof from~\cite{FischerMR68}. We only briefly sketch the solution.
We reduce from the reachability problem for Turing machines with exponential memory (assume that size of this memory is $2^n$).
Let us assume that there are $B$ tape symbols, we encode them as numbers $1, \ldots, B$
(this is a small modification wrt. to the classical encoding by $0, \ldots, B-1$).
We encode the tape by two numbers encoding: the part of the tape to the left of the head and to the right of the head, respectively.
Let us assume that the letter below the head is kept in the state of the three-counter automaton.
Tape contents are encoded classically as numbers in base $B+1$ with the least significant digit being the one closest to the head.
The simulation of the transition is also performed classically, namely if the head is moved left then the number encoding
the left part of the tape should be divided by $B$ and number encoded the right part of the tape should be multiplied by $B$
and the digit below the previous location of the head should be added to it.
These operations can be easily implemented by the use of the third counter.
Note now that as the digits encoding the tape belong to $[1,B]$ we know when the head is on the left or right border of the memory
as then the corresponding counter encoding tape content is equal to zero. 
Thus we can disallow in such cases moving the head further to the left or to the right, respectively.
Therefore the maximal sum of values of the counters is bounded from above by $B^{2^n} \leq 2^{2^{Bn}}$.
In such a way we reduced the reachability problem of TM with exponential memory to the reachability problem of $2^{2^{Bn}}$-bounded
three-counter automata. However we have slight technical complication here: the starting and finishing configurations do not
have counter values equal to zero (or one as in the proof of (1)), but pretty big values. If at the beginning of the run head is on the leftmost cell
then the counter encoding the part of the tape to the right of the head has value $1 + (B+1) + \ldots + (B+1)^{2^n-1}$.
Similar situation occurs at the end of the run. Notice however that this problem can be easily solved.
We can append to our automaton a pre-computation
of this big number (which can be easily performed using three zero-tested counters) and similarly append a post-computation which
decreases the corresponding counter by an appropriate value. This finishes the sketch of the proof of point (2).
\end{proof}

\end{document}